\newfont{\mycrnotice}{ptmr8t at 7pt}
\newfont{\myconfname}{ptmri8t at 7pt}
\newcommand{\eqdef}{\stackrel{\text{def}}{=}}
\newcommand{\cut}[1]{}
\newcommand{\commentresolved}[1]{}
\newcommand{\eg}{{\em e.g.}\xspace}
\newcommand{\set}[1]{\{#1\}}                    
\newcommand{\setof}[2]{\{{#1}\mid{#2}\}}        
 \newtheorem{theorem}{Theorem}[section]          	
 \newaliascnt{lemma}{theorem}				
 \newtheorem{lemma}[lemma]{Lemma}              	
 \newaliascnt{conjecture}{theorem}			
 \newaliascnt{remark}{theorem}				
 \newaliascnt{corollary}{theorem}			
 \newtheorem{corollary}[corollary]{Corollary}      
 \newaliascnt{definition}{theorem}			
 \newaliascnt{proposition}{theorem}			
 \newaliascnt{example}{theorem}			
 \newtheorem{example}[example]{Example}  	
\def\sp1{\#P$_1$}
\def\Min{\mathit{Min}}
\def\Max{\mathit{Max}}
\def\Succ{\mathit{Succ}}
\def\GoLeft{\mathit{Left}}
\def\GoRight{\mathit{Right}}
\def\Unchanged{\mathit{Unchanged}}
\DeclareMathOperator{\WFOMC}{WFOMC}
\DeclareMathOperator{\FOMC}{FOMC}
\DeclareMathOperator{\UFOMC}{UFOMC}
\DeclareMathOperator{\WMC}{WMC}
\DeclareMathOperator{\States}{States}
\DeclareMathOperator{\Tup}{Tup}
\newcommand{\FO}[1]{FO$^{#1}$}
\newcommand{\SHARP}{\mbox{\#P}}
\newcommand{\SHARPONE}{\mbox{\#P$_1$}}
\newcommand{\NPONE}{\mbox{NP$_1$}}
\newcommand{\HardSentence}{\Theta_1}
\newcommand{\HardClause}{\Xi_1}
\newcommand{\HardCQ}{\Upsilon_1}
\newcommand{\UTM}{U_1}
\newcommand{\DTM}{T_{\text{det}}}
\newcommand\ignore[1]{\relax}
\newcommand\summary[2]{#2}
\title{Symmetric Weighted First-Order Model Counting}
\author{
\alignauthor
Paul Beame \\
\affaddr{University of Washington}\\
\email{beame@cs.washington.edu}
\alignauthor
Guy Van den Broeck \\
\affaddr{KU Leuven}\\
\email{guy.vandenbroeck@cs.kuleuven.be}
\and
\alignauthor Eric Gribkoff\\
\affaddr{University of Washington}\\
\email{eagribko@cs.washington.edu}
\alignauthor Dan Suciu\\ 
\affaddr{University of Washington}\\
\email{suciu@cs.washington.edu}
}
\begin{document}

\fontsize{10.1pt}{10.3pt} 
\selectfont

\maketitle

\begin{abstract}
  The FO Model Counting problem (FOMC) is the following: given a
  sentence $\Phi$ in FO and a number $n$, compute the number of models
  of $\Phi$ over a domain of size $n$; the Weighted variant (WFOMC)
  generalizes the problem by associating a weight to each tuple and
  defining the weight of a model to be the product of weights of its
  tuples.  In this paper we study the complexity of the symmetric
  WFOMC, where all tuples of a given relation have the same weight.
  Our motivation comes from an important application, inference in
  Knowledge Bases with soft constraints, like Markov Logic Networks,
  but the problem is also of independent theoretical interest.  We
  study both the data complexity, and the combined complexity of FOMC
  and WFOMC.  For the data complexity we prove the existence of an
  \FO{3} formula for which FOMC is \SHARPONE-complete, and the
  existence of a Conjunctive Query for which WFOMC is
  \SHARPONE-complete.  We also prove that all $\gamma$-acyclic queries
  have polynomial time data complexity.  For the combined complexity,
  we prove that, for every fragment \FO{k}, $k\geq 2$, the combined
  complexity of FOMC (or WFOMC) is \SHARP-complete.
%
\end{abstract}


\section{Introduction}

\label{sec:intro}
%


\begin{sloppypar}
Probabilistic inference is becoming a central data management
problem. Large knowledge bases, such as
Yago~\cite{DBLP:journals/ai/HoffartSBW13},
Nell~\cite{DBLP:conf/aaai/CarlsonBKSHM10}, DeepDive~\cite{deepdive},
Reverb~\cite{DBLP:conf/emnlp/FaderSE11}, Microsoft's
Probase~\cite{DBLP:conf/sigmod/WuLWZ12} or Google's Knowledge
Vault~\cite{knoweldge-vault-kdd-2014}, have millions to billions of
uncertain tuples.  These systems scan large corpora of text, such as
the Web or complete collections of journal articles, and extract
automatically billions of structured facts, representing large
collections of knowledge.  For an illustration, Google's Knowledge
Vault~\cite{knoweldge-vault-kdd-2014} contains 1.6B triples of the
form (subject, predicate, object), for example,
\texttt{</m/02mjmr, /people/person/place\_of\_birth /m/02hrh0\_>}
%
%
where
%
\texttt{/m/02mjmr}
is the Freebase id for Barack Obama, and
%
\texttt{/m/02hrh0\_}
is the id for Honolulu~\cite{knoweldge-vault-kdd-2014}.  The triples
are extracted automatically from the Web, and each triple is annotated
with a probability $p$ representing the confidence in the extraction.
\end{sloppypar}

\begin{sloppypar}
A central and difficult problem in such systems is probabilistic
inference, or, equivalently weighted model counting.  The classical FO
Model Counting problem (FOMC) is: given a sentence $\Phi$ in First-Order 
Logic (FO) and a number $n$, compute the number of structures
over a domain of size $n$ that satisfy the sentence $\Phi$; in this
paper we consider only {\em labeled structures}, i.e.~isomorphic
structures are counted as distinct.  We denote the number of models by
$\FOMC(\Phi, n)$, for example $\FOMC(\forall x\exists y R(x,y), n) =
(2^n-1)^n$.\footnote{For a fixed $x$, there are $2^n$ assignments to $R(x,y)$, which all satisfy $\exists y R(x,y)$, except the one where all atoms are false. Moreover, the models for the $n$ values of $x$ can be counted independently and multiplied. }  In the Weighted FO Model Counting (WFOMC) variant, one
further associates a real number $w(t)$ called {\em weight} to each
tuple $t$ over the domain of size $n$, and defines the weight of a
structure as the product of the weights of all tuples in that
structure.  The Weighted Model Count $\WFOMC(\Phi, n, \mathbf{w})$ is
defined as the sum of the weights of all structures over a domain of
size $n$ that satisfy the sentence $\Phi$.  Weights map immediately to
probabilities, in the following way: if each tuple $t$ is included
in the database independently with probability $w(t)/(1+w(t))$, then
the probability that a formula $\Phi$ is true is $\Pr(\Phi)
=\WFOMC(\Phi, n, \mathbf{w})/\WFOMC(\text{true}, n, \mathbf{w})$,
where $\WFOMC(\text{true}, n, \mathbf{w}) = \prod_t (1+w(t))$ is the
sum of weights of all structures.
\end{sloppypar}

In this paper we study the {\em symmetric} WFMOC problem, where all
tuples from the same relation have the same weight, which
we denote $w_i$.  For example, a random graph $G(n,p)$ is a symmetric
structure, since every edge is present with the same probability $p$
(equivalently: has weight $p/(1-p)$), and FOMC is another special case
where all weights are set to 1.  The symmetric WFMOC problem occurs
naturally in Knowledge Bases with soft constraints, as we illustrate
next.

%

\begin{example} \label{ex:1} A Markov Logic Network
  (MLN)~\cite{DBLP:series/synthesis/2009Domingos} is a finite set of
  soft or hard constraints.  Each constraint is a pair $(w, \varphi)$,
  where $\varphi$ is a formula, possibly with free variables
  $\mathbf{x}$, and $w \in [0,\infty]$ is a weight\footnote{In typical
    MLN systems, users specify the log of the weight rather than the
    weight.  The pair $(1.098, \varphi)$ means that the weight of $\varphi$ is $w = \text{exp}(1.098) \approx 3$.  Using
    logs simplifies the learning task. We do not address learning and will omit logs; $(w,\varphi)$ means that $\varphi$ has weight
    $w$.}.  For example, the soft constraint
\begin{align}
  (3, \text{Spouse}(x,y)\wedge \text{Female}(x) \Rightarrow \text{Male}(y)) \label{eq:mln}
\end{align}
specifies that, typically, a female's spouse is male, and associates
the weight $w=3$ to this constraint.  If $w = \infty$ then we call
$(w, \varphi)$ a hard constraint.

\begin{sloppypar}
The semantics of MLNs naturally extend the Weighted Model Counting
setting. Given a finite domain (set of constants), an MLN defines a 
probability distribution over all structures for that domain (also called {\em possible worlds}). 
Every structure $D$ has a
weight
\begin{align*}
  W(D) = \prod_{(w,\varphi(\mathbf{x})) \in \text{MLN},
      \mathbf{a}\in D^{|\mathbf{x}|}: w<\infty \wedge D\models
      \varphi[\mathbf{a}/\mathbf{x}]}w
\end{align*}
In other words, for each soft constraint $(W, \varphi)$, and for every
tuple of constants $\mathbf{a}$ such that $\varphi(\mathbf{a})$ holds
in $D$, we multiply $D$'s weight by $w$. For example, given the MLN
that consists only of the soft constraint (\ref{eq:mln}), the weight
of a world $D$ is $3^N$, where $N$ is the number of pairs of constants
$a,b$ for which $\text{Spouse}(a,b), \text{Female}(a) \Rightarrow
\text{Male}(b)$ holds in $D$. The weight $W(\Phi)$ of a sentence
$\Phi$ is defined as the sum of weights of all worlds $D$ that satisfy
both $\Phi$ and all hard constraints in the MLN; its probability is
obtained by normalizing $\Pr_{\mathit{MLN}}(\Phi) = W(\Phi)/W(\text{true})$.
Notice that the symmetric WFOMC problem corresponds to the special
case of an MLN consisting of one soft constraint $(w_i,
R_i(\mathbf{x}_i))$ for each relation $R_i$, where $|\mathbf{x}_i| =
\text{arity}(R_i)$.\end{sloppypar}
\end{example}

Today's MLN systems (Alchemy~\cite{alchemy},
Tuffy~\cite{DBLP:journals/pvldb/NiuRDS11,DBLP:conf/sigmod/ZhangR13})
use an MCMC algorithm called MC-SAT~\cite{DBLP:conf/aaai/PoonD06} for
probabilistic inference.  The theoretical convergence guarantees of
MC-SAT require access to a uniform sampler over satisfying assignments
to a set of constraints. In practice, MC-SAT implementations rely on
SampleSAT~\cite{DBLP:conf/aaai/WeiES04}, which provides no guarantees
on the uniformity of solutions. Several complex examples are known in
the literature where model counting based on SampleSAT leads to highly
inaccurate estimates ~\cite{gomes2007sampling}.

A totally different approach to computing $\Pr_{\mathit{MLN}}(\Phi)$ is to reduce
it to a symmetric
$\WFOMC$~\cite{van2011lifted,DBLP:conf/uai/GogateD11a,DBLP:conf/kr/BroeckMD14,DBLP:journals/pvldb/JhaS12},
and this motivates our current paper.  We review here briefly one such
reduction, adapting from
~\cite{DBLP:journals/pvldb/JhaS12,DBLP:conf/kr/BroeckMD14}. 

\begin{example} \label{ex:1:cont} 
\begin{sloppypar}
Given an MLN, replace every soft
  constraint $(w, \varphi(\mathbf{x}))$ by two new constraints:
  $(\infty, \forall \mathbf{x} (R(\mathbf{x}) \vee
  \varphi(\mathbf{x})))$ and $(1/(w-1), R(\mathbf{x}))$.  Here $R$ is
  a new relational symbol with the same arity as the number of free
  variables in $\varphi$, and the constraint $(1/(w-1),
  R(\mathbf{x}))$ defines $R$ as a relation where all tuples
  have weight $1/(w-1)$.  Therefore, the probability of a formula
  $\Phi$ in the MLN can be computed as a conditional probability over
  a symmetric, tuple-independent database: $\Pr_{\mathit{MLN}}(\Phi) =
  \Pr(\Phi | \Gamma)$, where $\Gamma$ is the conjunction of all hard
  constraints\footnote{The reason why this works is the following: in
    original MLN, each tuple $\mathbf{a}$ contributes to $W(D)$ a
    factor of 1 or $w$, depending on whether $\varphi(\mathbf{a})$ is
    false or true in $D$; after the rewriting, the contribution of
    $\mathbf{a}$ is $1/(w-1)$ when $\varphi(\mathbf{a})$ is false,
    because in that case $R(\mathbf{a})$ must be true, or $1+1/(w-1)=
    w/(w-1)$ when $\varphi(\mathbf{a})$ is true, because
    $R(\mathbf{a})$ can be either false or true. The ratio is the same
    $1:w = [1/(w-1)]:[w/(w-1)]$.  }.
    Note that this reduction to WFOMC is independent of the 
    finite domain under consideration.
\end{sloppypar}

For example, the soft constraint in (\ref{eq:mln}) is translated into 
the hard constraint:
\begin{align*}
  \forall x, y (R(x,y) \vee \neg \text{Spouse}(x,y) \vee \neg \text{Female}(x) \vee \text{Male}(y))
\end{align*}
and a tuple-independent probabilistic relation $R$ where all tuples
have weight $1/(3-1) = 1/2$, or, equivalently, have probability
$(1/2)/(1+(1/2)) = 1/3$.
\end{example}


Thus, our main motivation for studying the symmetric $\WFOMC$ is very
practical, as symmetric models have been extensively researched in the
AI community recently, for inference in MLNs and
beyond~\cite{kersting2009counting,van2011lifted,DBLP:conf/aaai/Niepert13,vihav-nips-2014}.
Some tasks on MLNs, such as parameter learning~\cite{VdBStarAI13b}, naturally exhibit symmetries. For others, such as computing conditional probabilities given a large ``evidence'' database, the symmetric $\WFOMC$ model is applicable when the database has bounded Boolean rank~\cite{VdBNIPS13}.
Moreover, the problem is of independent theoretical interest as we
explain below.  We study both the data complexity, and the combined
complexity.  In both settings we assume that the vocabulary $\sigma =
(R_1, \ldots, R_m)$ is fixed, and so are the weights
$\mathbf{w} = (w_1, \ldots, w_m)$ associated with the relations.  In {\em data complexity}, the
formula $\Phi$ is fixed, and the only input is the number $n$
representing the size of the domain.  In this case $\WFOMC$ is a
counting problem over a unary alphabet: {\em given an input $1^n$,
  compute $\WFOMC(\Phi, n, \mathbf{w})$}.  It is immediate that this
problem belongs to the class \SHARPONE, which is the set of \#P
problems over a unary input
alphabet~\cite{DBLP:journals/siamcomp/Valiant79}.  In the {\em
  combined complexity}, both $n$ and the formula $\Phi$ are input.

In this paper we present results on the data complexity and the
combined complexity of the FOMC and WFOMC problem, and also some
results on the associated decision problem.

\subsubsection*{Results on Data Complexity}
In a surprising result~\cite{DBLP:conf/kr/BroeckMD14} has proven that
for \FO{2} the data complexity of symmetric $\WFOMC$ is in
PTIME (reviewed in Appendix~\ref{s:fo2-ptime-proof}).\footnote{PTIME data complexity for symmetric $\WFOMC$ is called \emph{domain-liftability} in the AI and lifted inference literature~\cite{broeck2011completeness}.}  This is
surprising because \FO{2} (the class of FO formulas restricted to two
logical variables) contains many formulas for which the asymmetric
problem was known to be \SHARP-hard.  An example is $\Phi=\exists x
\exists y (R(x) \wedge S(x,y) \wedge T(y))$, which is \SHARP-hard over
asymmetric structures,
but the number of models 
is\footnote{Fix the relations $R, T$, and let their cardinalities be
  $|R|=k$ and $|T|=m$. Then the structure does {\em not} satisfy
  $\Phi$ iff $S$ contains none of the $km$ tuples in $R \times T$, proving the formula.}  $2^{2n+n^2} - \sum_{k,m}{n \choose k}{n
  \choose m}2^{n^2-km}$, which is a number computable in time
polynomial in $n$.\footnote{Tractability of $\Phi$ was noted before in, for example~\cite{DBLP:series/synthesis/2011Suciu,broeck2011completeness}.} More generally, the symmetric WFOMC problem for
$\Phi$ is in PTIME.

This begs the question: could it be the case that {\em every} FO
formula is in PTIME?  The answer was shown to be negative by Jaeger
and Van den Broeck~\cite{jaeger-broeck-2012,jaegerlower}, using the
following argument.  Recall that the {\em spectrum},
$\texttt{Spec}(\Phi)$, of a formula $\Phi$ is the set of numbers $n$
for which $\Phi$ has a model over a domain of size
$n$~\cite{DBLP:journals/bsl/DurandJMM12}.  Jaeger and Van den Broeck
observed that the spectrum membership problem, ``{\em is $n \in
  \texttt{Spec}(\Phi)$?}'', can be reduced to WFOMC, by checking whether
$\FOMC(\Phi,n) > 0$.  Then, using a result in~\cite{jones1972turing}, if $\texttt{ETIME} \neq \texttt{NETIME}$, then there
exists a formula $\Phi$ for which computing $\WFOMC$ is not in
polynomial time\footnote{Recall that $\texttt{ETIME} = \bigcup_{c \geq
    1} \texttt{DTIME}(2^{cn})$ and $\texttt{NETIME} = \bigcup_{c \geq
    1} \texttt{NTIME}(2^{cn})$, and are not to be confused with the
  more familiar classes EXPTIME and NEXPTIME, which are $\bigcup_{c
    \geq 1} \texttt{DTIME}(2^{n^c})$ and $\bigcup_{c \geq 1}
  \texttt{NTIME}(2^{n^c})$ respectively.}. However, no hardness
results for the symmetric $\WFOMC$ were known to date.

What makes the data complexity of the symmetric WFOMC difficult to
analyze is the fact that the input is a single number $n$.  Valiant
already observed in~\cite{DBLP:journals/siamcomp/Valiant79} that such
problems are probably not candidates for being \#P-complete.  Instead,
he defined the complexity class \SHARPONE, to be the set of counting
problems for NP computations over a single-letter input alphabet.
Very few hardness results are known for this class: we are aware only
of a graph matching problem that was proven by Valiant, and of a
language-theoretic problem by Bertoni and
Goldwurm~\cite{DBLP:journals/ita/BertoniG93}.

Our data complexity results are the following. First, we establish the
existence of an FO sentence $\HardSentence$ for which the data
complexity of the $\FOMC$ problem is \SHARPONE-hard; and we also
establish the existence of a conjunctive query $\HardCQ$ for which the
data complexity of the $\WFOMC$ problem is \SHARPONE-hard.  Second, we
prove that every $\gamma$-acyclic conjunctive query without self-joins
is in polynomial time, extending the result
in~\cite{DBLP:conf/kr/BroeckMD14} from \FO{2} to $\gamma$-acyclic
conjunctive queries.  We give now more details about our results, and
explain their significance.

%

The tractability for \FO{2}~\cite{DBLP:conf/kr/BroeckMD14} raises a
natural question: do other restrictions of FO, like \FO{k} for $k \geq
3$, also have polynomial data complexity?  By carefully analyzing the
details of the construction of $\HardSentence$ we prove that it is
actually in \FO{3}.  This implies a sharp boundary in the \FO{k}
hierarchy where symmetric WFOMC transitions from tractable to
intractable: for $k$ between 2 and~3.
The tractability of $\gamma$-acyclic queries raises another question:
could all conjunctive queries be tractable for symmetric $\WFOMC$?  We answer this also in the negative: we prove that there
exists a conjunctive query $\HardCQ$ for which the symmetric $\WFOMC$
problem is \SHARPONE-hard.  It is interesting to note that the
decision problem associated to $\WFOMC$, namely {\em given $n$, does
  $n \in \texttt{Spec}(\Phi)$?} is trivial for conjunctive queries,
since every conjunctive query has a model over any domain of size $n
\geq 1$.  Therefore, our \SHARPONE-hardness result for $\HardCQ$ is an
instance where the decision problem is easy while the corresponding
weighted counting problem is hard.  We note that, unlike WFOMC, we do not
know the exact complexity of the unweighted, FOMC problem for
conjunctive queries.


\paragraph*{0-1 Laws}
Our data complexity hardness result sheds some interesting light on
0-1 laws.  Recall that, if $C$ is a class of finite structures and $P$
is a property over these structures, then $\mu_n(P)$ denotes the
fraction of labeled\footnote{The attribute {\em labeled} means that isomorphic structures are counted as distinct; 0-1 laws
  for unlabeled structures also exist. In this paper, we discuss labeled structures only.}  structures in $C$ over a domain
of size $n$ that satisfy the property
$P$~\cite{DBLP:conf/mfcs/KolaitisV00}.  A logic has a 0-1 law over the
class of structures $C$, if for any property $P$ expressible in that
logic, $\mu(P) \eqdef \lim_{n \rightarrow \infty} \mu_n(P)$ is either
0 or 1.  Fagin~\cite{DBLP:journals/jsyml/Fagin76} proved a 0-1 law for
First-Order logic and all structures, by using an elegant transfer
theorem: there exists a unique, countable structure $\mathbf{R}$,
which is characterized by an infinite set of {\em extension axioms},
$\tau$.  He proved that, for every extension axiom, $\lim
\mu_n(\tau)=1$, and this implies $\lim \mu_n(\Phi) = 1$ if $\Phi$ is
true in $\mathbf{R}$, and $\lim \mu_n(\Phi)=0$ if $\Phi$ is false in
$\mathbf{R}$.  Compton~\cite{DBLP:journals/iandc/Compton88} proved 0-1
laws for several classes of structures $C$.  A natural question to ask
is the following: does there exists an elementary proof of the 0-1
laws, by computing a closed formula $\FOMC(\Phi, n)$ for every $\Phi$,
then using elementary calculus to prove that that $\mu_n(\Phi)$
converges to 0 or 1?  For example, if $\Phi = \forall x\exists y
R(x,y)$, then $\FOMC(\Phi, n) = (2^n-1)^n$ and $\mu_n(\Phi) =
(2^n-1)^n/2^{n^2} \rightarrow 0$; can we repeat this argument for
every $\Phi$? On a historical note, Fagin confirms in personal communication that he originally tried to prove the 0-1 law by trying to find such a closed formula, which failed as an approach. 
Our \SHARPONE-result for FO proves that no such
elementary proof is possible, because no closed formula for
$\FOMC(\Phi, n)$ can be computed in general (unless \SHARPONE\ is in
PTIME).

\subsubsection*{Results on the Combined Complexity} Our main result on the
combined complexity is the following.  We show that, for any $k \geq
2$, the combined complexity of FOMC for \FO{k} is \SHARP-complete;
membership is a standard application of Scott's reduction, while
hardness is by reduction from the model counting problem for Boolean
formulas.  Recall that the vocabulary $\sigma$ is always assumed to be
fixed: if it were allowed to be part of the input, then every Boolean
formula is a special case of an \FO{0} formula, by creating a new
relational symbol of arity zero for each Boolean variable, and all
hardness results for Boolean formulas carry over immediately to~\FO{0}.

\subsubsection*{The Associated Decision Problem} We also discuss and
present some new results on the decision problem associated with
(W)FOMC: ``{\em given $\Phi$, $n$, does $\Phi$ have a model over a
  domain of size $n$}?''.  The data complexity variant is, of course,
the spectrum membership problem, which has been completely solved by
Jones and Selman~\cite{jones1972turing}, by proving that the class of
spectra coincides with $\texttt{NETIME}$, that is,
$\setof{\texttt{Spec}(\Phi)}{\Phi \in \mathit{FO}} = \texttt{NETIME}$.  Their
result assumes that the input $n$ is represented in binary, thus the
input size is $\log n$.  In this paper we are interested in the unary
representation of $n$, as $1^n$, which is also called the {\em tally
  notation}, in which case case $\texttt{NETIME}$ naturally identifies
with \NPONE{}.  Fagin proved that, in the tally notation,
$\setof{\texttt{Spec}(\Phi)}{\Phi \in \mathit{FO}} = \NPONE$~\cite[Theorem~6,
Part~2]{fagin1974generalized}.


For the decision problem, our result is for the combined complexity:
given both $\Phi, n$, does $n \in \texttt{Spec}(\Phi)$?  We prove that
this problem is NP-complete for \FO{2}, and PSPACE-complete for FO.
The first of these results has an interesting connection to the finite
satisfiability problem for \FO{2}, which we discuss here.  Recall the
classical satisfiability problem in finite model theory: ``{\em given
  a formula $\Phi$ does it have a finite model}?'', which is
equivalent to checking $\texttt{Spec}(\Phi) \neq \emptyset$.
Gr{\"a}del, Kolaitis and Vardi~\cite{DBLP:journals/bsl/GradelKV97}
have proven the following two results for \FO{2}: if a formula $\Phi$
is satisfiable then it has a finite model of size at most exponential
in the size of the sentence $\Phi$, and deciding whether $\Phi$ is
satisfiable is NEXPTIME-complete in the size of $\Phi$.  These two
results already prove that the combined complexity for deciding $n \in
\texttt{Spec}(\Phi)$ cannot be in polynomial time: otherwise, we could
check satisfiability in EXPTIME by iterating $n$ from $1$ to
exponential in the size of $\Phi$, and checking $n \in
\texttt{Spec}(\Phi)$.  Our result settles the combined complexity,
proving that it is NP-complete.

The paper is organized as follows: we introduce the basic definitions
in \autoref{sec:background}, present our results for the data
complexity of the FOMC and WFOMC problems in \autoref{sec:hardness},
present all results on the combined complexity in
\autoref{sec:combined-complexity}, then conclude in
\autoref{sec:conclusions}.

\section{Background}

\label{sec:background}

We review here briefly the main concepts, some already introduced in
\autoref{sec:intro}.

\paragraph*{Weighted Model Counting (WMC)}

\begin{sloppypar}
The {\em Model Counting} problem is: given a Boolean formula $F$,
compute the number of satisfying assignments $\#F$.  In {\em Weighted
  Model Counting} we are given two real functions $w, \bar w :
\text{Vars}(F) \rightarrow \mathbf{R}$ associating two weights $w(X),
\bar w(X)$ to each variable in $\text{Vars}(F) = \set{X_1, \ldots,
  X_n}$.
%
The weighted model count
$\WMC(F, w, \bar w)$ is defined as:\\[-2ex]
\begin{align}
  \WMC(F,w, \bar w) \eqdef & \sum_{\theta: \theta(F)=1} W(\theta) \label{eq:wmc}
\end{align}
where, $\forall \theta: \text{Vars}(F) \rightarrow \set{0,1}$:
\begin{align}
W(\theta) \eqdef & \prod_{i: \theta(X_i)=0} \bar w(X_i) \times
\prod_{i: \theta(X_i)=1} w(X_i)\label{eq:def:wmc}\\[-5ex]\nonumber
\end{align}
\end{sloppypar}

\begin{sloppypar}
The model count is a special case $\#F = \WMC(F, 1, 1)$.
\end{sloppypar}

\begin{sloppypar}
  The standard definition of WMC in the literature does not mention
  $\bar w$, instead sets $\bar w = 1$; as we will see, our extension
  is non-essential. When $\bar w = 1$, then we simply drop $\bar w$
  from the notation, and write $\WMC(F, w)$ instead of $\WMC(F, w,
  1)$.  In the {\em probability computation problem}, each variable
  $X_i$ is set to true with some known probability $p(X_i) \in [0,
  1]$, and we want to compute $\Pr(F,p) \eqdef \WMC(F, p, 1-p)$, the
  probability that $F$ is true. All these variations are equivalent,
  because
  of the following identities:\\[-3ex]
\begin{align}
  \WMC(F, w, \bar w) = & \WMC(F, w/\bar w, 1) \times \prod_i \bar w(X_i) \label{eq:wbarw}\\
  \WMC(F, w, \bar w) = & \Pr(F, w/(w+\bar w)) \times \prod_i (w(X_i) +
  \bar w(X_i))\nonumber\\[-5ex]\nonumber
\end{align}
Throughout the paper we write $1$ for the constant function with value
1, and $w_1+w_2$, and $w_1/w_2$ for functions $X\mapsto
w_1(X)+w_2(X)$ and $X \mapsto w_1(X)/w_2(X)$ resp.
\end{sloppypar}

\paragraph*{Weighted First-Order Model Counting (WFOMC)}

\begin{sloppypar}
Consider FO formulas over a fixed relational vocabulary $\sigma =
(R_1, \ldots, R_m)$ and equality $=$.  Given a domain
size $n$, denote $\Tup(n)$ the set of ground tuples (i.e., ground atoms without equality) over the domain,
thus $|\Tup(n)| = \sum_i n^{\text{arity}(R_i)}$.   
The {\em lineage} of
an FO sentence $\Phi$ refers to a Boolean function $F_{\Phi, n}$ over
$\Tup(n)$ (a ground FO sentence), as well as the corresponding Boolean function over propositional variables referring to ground tuples (a propositional sentence).
It is defined inductively by 
$F_{t, n} = t$ for ground tuples $t$,
$F_{\neg \Phi,n} = \neg F_{\Phi,n}$,
$F_{(\Phi_1 \text{ op } \Phi_2),n} = F_{\Phi_1,n} \text{ op }
F_{\Phi_2,n}$ for $\text{op} \in \set{\wedge, \vee}$, 
$F_{a = b,n} = \texttt{false}$, $F_{a = a,n} = \texttt{true}$
and $F_{\exists x \Phi, n} = \bigvee_{a \in [n]} F_{\Phi[a/x], n}$,
$F_{\forall x \Phi, n} = \bigwedge_{a \in [n]} F_{\Phi[a/x], n}$.  For
any fixed sentence $\Phi$, the size of its lineage is polynomial in
$n$.  Given a domain size $n$ and weight functions $w, \bar w :
\Tup(n) \rightarrow \mathbf{R}$, the Weighted First-Order Model Count
of $\Phi$ is $\WFOMC(\Phi, n, w, \bar w) \eqdef \WMC(F_{\Phi, n},
w, \bar w)$.
\end{sloppypar}

\paragraph*{Symmetric WFOMC}

\begin{sloppypar}
  In the {\em symmetric} WFOMC, the weight of a tuple depends only on
  the relation name and not on the domain constants.  We call a {\em
    weighted vocabulary} a triple $(\sigma, \mathbf{w}, \mathbf{\bar
    w})$ where $\sigma = (R_1, \ldots, R_m)$ is a relational
  vocabulary and $\mathbf{w} = (w_1,\ldots, w_m)$, $\mathbf{\bar w} =
  (\bar w_1, \ldots, \bar w_m)$ represent the weights (real numbers)
  for the relational symbols.  For any domain size $n$, we extend
  these weights to $\Tup(n)$ by setting $w'(R_i(a_1, \ldots, a_k)) =
  w_i$ and $\bar w'(R_i(a_1, \ldots, a_k)) = \bar w_i$, and we define
  $\WFOMC(\Phi, n, \mathbf{w}, \mathbf{\bar w}) \eqdef \WFOMC(\Phi, n,
  w', \bar w')$.  Throughout this paper we assume that WFOMC refers to
  the symmetric variant, unless otherwise stated.
\end{sloppypar}

For a simple illustration, consider the sentence $\varphi = \exists y
S(y)$.  Then $\WFOMC(\varphi, n, w_S, \bar w_S) = (\bar w_S + w_S)^n -
(\bar w_S)^n$, because the sum of the weights of all possible worlds
is $(\bar w_S + w_S)^n$, and we have to subtract the weight of the
world where $S = \emptyset$.  For another example, consider $\Phi =
\forall x \exists y R(x,y)$.  The reader may check that $\WFOMC(\Phi,
n, w_R, \bar w_R) = \left((w_R+\bar w_R)^n - \bar w^n_R \right)^n$.
In particular, over a domain of size $n$, the formula $\Phi$ has $(2^n
- 1)^n$ models (by setting $w_R = \bar w_R = 1$).

\begin{table*}[t]
  \centering
  \renewcommand{\arraystretch}{1.3} 
\begin{tabular}{|p{0.11\textwidth}|p{0.30\textwidth}|p{0.49\textwidth}|}\hline
Problem & Weights for $R$, $S$, and $T$ tuples & Solution for $\Phi = \forall x \forall y(R(x) \vee S(x,y) \vee T(y))$ \\\hline\hline
Symmetric FOMC & $w=\bar w = 1$  & $\FOMC(\Phi,n)= \sum_{k,m = 0,n} {n \choose k}{n \choose   m} 2^{n^2-km}$ \\ \hline
Symmetric WFOMC & $w_{R}$, $w_{S}$, $w_{T}$, \newline $\bar w_{R},\bar
w_{S},\bar w_{T}$ &                           
$\WFOMC(\Phi,n,\mathbf{w},\mathbf{\bar w})= \sum_{k,m = 0,n} {n \choose k}{n \choose   m} W_{k,m}$ \newline
where $W_{k,m}=w_{R}^{n-k} \cdot \bar w_{R}^k \cdot  w_{S}^{km}
\cdot\left(w_{S} + \bar w_{S}\right)^{n^2-km} \cdot w_{T}^{n-m} \cdot
\bar w_{T}^m$\\\hline
Asymmetric WFOMC & $w(R(i), w(S(i,j)), w(T(j))$\newline $\bar w(R(i)),\bar w(S(i,j)), \bar w(T(j))$ \newline depend on $i,j$ & \SHARP-hard for $\Phi$~\cite{DBLP:journals/vldb/DalviS07}\\\hline
\end{tabular}
\caption{Three variants of WFOMC, of increasing generality,
  illustrated on the sentence
  $\Phi = \forall x \forall y(R(x) \vee S(x,y) \vee T(y))$.
  This paper discusses the symmetric  cases only.
    }
  \label{table:summary}
\end{table*}

\paragraph*{Data Complexity and Combined Complexity}
\begin{sloppypar}
  We consider the weighted vocabulary $(\sigma, \mathbf{w},
  \mathbf{\bar w})$ fixed.  In the {\em data complexity}, we fix
  $\Phi$ and study the complexity of the problem: {\em given $n$,
    compute $\WFOMC(\Phi, n, \mathbf{w}, \mathbf{\bar w})$}.  In the
  combined complexity, we study the complexity of the problem: {\em
    given $\Phi, n$, compute $\WFOMC(\Phi, n, \mathbf{w}, \mathbf{\bar
      w})$}.  All our upper bounds continue to hold if the weights
  $\mathbf{w}, \mathbf{\bar w}$ are part of the input. We also
  consider the data- and combined-complexity of the associated
  decision problem (where we ignore the weights) {\em given $n$, does
    $\Phi$ have a model over a domain of size $n$?}
\end{sloppypar}


\paragraph*{Weights and Probabilities}

While in practical applications the weights are positive real numbers,
and the probabilities are numbers in $[0,1]$, in this paper we impose
no restrictions on the values of the weights and probabilities.  The
definition (\ref{eq:wmc}) of $\WMC(F, w)$ applies equally well to
negative weights, and, in fact, to any semiring structure for the
weights~\cite{kimmig2012algebraic}.  There is, in fact, at least one
application of negative
probabilities~\cite{DBLP:journals/pvldb/JhaS12}, namely the particular
reduction from MLNs to WFOMC described in \autoref{ex:1:cont}: a
newly introduced relation has weight $1/(w-1)$, which is negative when
$w < 1$.  Then, the associated
probability $p = w/(1+w)$ belongs to $(-\infty,0) \cup (1,\infty)$.
%
%
%
%

As a final comment on negative weights, we note that the complexity of
the symmetric WFOMC problem is the same for arbitrary weights as for
positive weights.  Indeed, the expression $\WFOMC(\Phi, n,
\mathbf{w})$ is a multivariate polynomial in $m$ variables $w_1,
\ldots, w_m$, where each variable has degree $n$. The polynomial has
$(n+1)^m = n^{O(1)}$ real coefficients.  Given access to an oracle
computing this polynomial for arbitrary positive values for $w$, we
can compute in polynomial time all $n^{O(1)}$ coefficients with as
many calls to the oracle; once we know the coefficients we can compute
the polynomial at any values $w_1, \ldots, w_m$, positive or negative.
%
%
%
%
%

For all upper bounds in this paper we assume that the weights $w, \bar
w$, or probabilities $p$, are given as rational numbers represented as
fractions of two integers of $n$ bits each.  We assume w.l.o.g. that
all fractions have the same denominator: this can be enforced by
replacing the denominators by their least common multiplier, at the
cost of increasing the number of bits of all integers to at most
$n^2$.  It follows that the weight of a world $W(\theta)$
(Eq.(\ref{eq:def:wmc})) and $\WMC(F, w, \bar w)$ can be represented as
ratios of two integers, each with $n^{O(1)}$ bits.


\paragraph*{Summary}

\autoref{table:summary} summarizes the taxonomy and illustrates the
various weighted model counting problems considered in this paper.
Throughout the rest of the paper, FOMC and WFOMC refer to the
symmetric variant, unless otherwise mentioned.

\section{Data Complexity}

\label{sec:hardness}

Recall that the language \FO{k} consists of FO formulas with at most
$k$ distinct logical variables.

\subsection{Lower Bounds}

Our first lower bound is for an \FO{3} sentence:

\begin{theorem} \label{th:sharp1:fo3} There exists an \FO{3}
  sentence, denoted $\HardSentence$, s.t.\ the FOMC problem for
  $\HardSentence$ is \SHARPONE-complete. 
\end{theorem}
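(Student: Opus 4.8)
The plan is to prove the two directions separately, with membership being routine and hardness carrying all the difficulty. For membership in \SHARPONE, observe that for the fixed sentence $\HardSentence$ a structure over the domain $[n]$ is specified by $|\Tup(n)| = \sum_i n^{\text{arity}(R_i)}$ bits, which is polynomial in the input length $n$. A nondeterministic machine on the unary input $1^n$ guesses such a structure and then verifies $\HardSentence$ in polynomial time, since first-order model checking has polynomial-time data complexity. Each guess corresponds to exactly one structure, and the branch accepts precisely when the structure is a model, so the number of accepting branches equals $\FOMC(\HardSentence, n)$; hence the function lies in \SHARPONE.

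For \SHARPONE-hardness I would reduce from the generic \SHARPONE\ problem of counting accepting computations of a polynomial-time nondeterministic Turing machine $M$ on a unary input $1^n$. Because $\HardSentence$ must be a single fixed sentence, I would fix one universal nondeterministic machine $\UTM$ and let $\HardSentence$ assert that a structure encodes a valid, accepting computation tableau of $\UTM$. The device that lets one sentence handle every target $M$ is to make the domain size carry the input: the reduction maps $(M,1^n)$ to a unary domain size $N$ that is polynomial in $n$ and encodes the pair $\langle M, 1^n\rangle$, and $\HardSentence$ forces the initial row of the tableau to spell out the binary encoding of $N$, from which $\UTM$ decodes $M$ and $n$ and simulates $M$ on $1^n$. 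If $\UTM$ simulates each nondeterministic branch of $M$ by exactly one branch of its own, the correspondence is parsimonious, so the models of $\HardSentence$ over $[N]$ biject with the accepting computations of $M$ on $1^n$, giving $\FOMC(\HardSentence, N) = g(1^n)$ for the target counting function $g$. Since $M$ runs in time polynomial in $n$, the tableau has only polynomially many cells, coordinates can be named by fixed-length tuples of domain elements, and $N$ stays polynomial in $n$, keeping the whole reduction inside \SHARPONE.

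The technical heart, and the \emph{main obstacle}, is realizing this tableau encoding with only three logical variables. A naive encoding simultaneously names a cell together with several neighbors in order to state the transition constraints and the successor relations on the time and space coordinates, which immediately exceeds three variables; this is why I would first build an ordinary FO sentence and only afterwards analyze it carefully to push the variable count down to three. I would encode cell coordinates positionally, in binary over the linear order on the domain, and introduce auxiliary relations that precompute successor, comparison, and bit-extraction on these encoded coordinates. Each such relation can then be defined, and each local transition rule enforced, by reusing the same three variables and quantifying them one at a time, storing intermediate information in the relations rather than in simultaneously live variables. The delicate point is that the locality of $\UTM$'s transition function—a cell depends on a constant-size window of the previous row—must be expressed without ever referring to more than three positions at once, and verifying that three variables genuinely suffice for this bookkeeping is where I expect the bulk of the effort to go.

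Finally I would assemble $\HardSentence$ from these components: the axioms defining the order and successor machinery, the initialization forcing the initial row to encode $N$, the transition axioms enforcing $\UTM$'s rules, and the acceptance condition. Checking that every conjunct is expressible in \FO{3}, and that the model-to-computation map is a bijection preserving counts, completes the hardness argument; combined with membership this yields the claimed \SHARPONE-completeness.
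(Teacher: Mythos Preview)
Your high-level architecture matches the paper's: membership is routine, and hardness goes through a fixed universal \SHARPONE\ machine $\UTM$ whose accepting computations, on an input encoded in the domain size, are put in correspondence with the models of $\HardSentence$. The gap is in how you reach three variables.

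You say that since $M$ runs in time polynomial in $n$, tableau coordinates ``can be named by fixed-length tuples of domain elements,'' and later propose to ``encode cell coordinates positionally, in binary over the linear order.'' Neither idea gets you to \FO{3}. First, the length of those tuples is the exponent in $M$'s time bound, which varies with $M$; a naive simulator $\UTM$ therefore has \emph{no} fixed polynomial running time in its own input $N$, so there is no fixed tuple length to commit to in $\HardSentence$. Second, even if you managed to bound $\UTM$'s time by $N^a$ for some fixed $a>1$, indexing an $N^a\times N^a$ tableau by $a$-tuples forces relations of arity $2a$ and the standard Trakhtenbrot encoding lands in \FO{3a}, not \FO{3}; this is precisely the content of the paper's Lemma~\ref{lemma:2}. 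A binary positional encoding does not circumvent this: to state a local transition rule you must simultaneously name a cell and a neighbour, and each such name already consumes $a$ domain elements.

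The idea you are missing is that the three-variable bound is achieved \emph{before} any formula is written, by engineering the pairing $N=e(i,j)$ so that machine $M_i$'s running time on input $j$ is at most $N$ itself. This makes $\UTM$ run in \emph{linear} time $c\cdot N$; time steps and tape positions are then single domain elements, and the constant $c$ is absorbed by splitting the run into $c$ epochs and the tape into $c$ regions, each handled by its own relation symbol rather than an extra variable. Only under this regime does the transition-rule bookkeeping fit inside three reused variables.

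A smaller point: your claimed bijection giving $\FOMC(\HardSentence,N)=g(1^n)$ cannot hold as stated. Any encoding that relies on a linear order of the domain---yours does, and must---produces $N!$ models per accepting run, so one gets $\FOMC(\HardSentence,N)=N!\cdot g(1^n)$. This is harmless for hardness (divide by $N!$ in the reduction), but the correspondence is not one-to-one.
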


%
Van den Broeck et al.~\cite{DBLP:conf/kr/BroeckMD14} have shown that the Symmetric WFOMC problem for
every \FO{2} formula has polynomial time data complexity (the proof is reviewed in
Appendix~\ref{s:fo2-ptime-proof}); Theorem \ref{th:sharp1:fo3} shows
that, unless \SHARPONE\ is in PTIME, the result cannot extend to \FO{k} for $k > 2$.

Our second lower bound is for a conjunctive query, or, dually, a
positive clause without equality. Recall that a {\em clause} is a
universally quantified disjunction of literals, for example $\forall x
\forall y(R(x) \vee \neg S(x,y))$.  A {\em positive clause} is a
clause where all relational atoms are positive.  A {\em conjunctive
  query} (CQ) is an existentially quantified conjunction of positive
literals, e.g. $\exists x \exists y (R(x) \wedge S(x,y))$.  Positive
clauses without the equality predicate are the duals of CQs, and therefore the WFOMC problem is essentially the same for
positive clauses without equality as for CQs.  Note
that the dual of a clause with the equality predicate is a CQ with $\neq$, \eg the dual of $\forall x \forall y (R(x,y) \vee x
= y)$ is $\exists x \exists y(R(x,y)\wedge x\neq y)$.

\begin{corollary} \label{cor:sharp1:cq} There exists a positive clause
  $\HardClause$ without equality s.t.\ the Symmetric
  WFOMC problem for $\HardClause$ is \SHARPONE-hard. Dually, there
  exists a CQ $\HardCQ$ s.t.\ the Symmetric WFOMC
  problem for $\HardCQ$ is \SHARPONE-hard.
\end{corollary}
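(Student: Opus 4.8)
The plan is to derive the corollary from \autoref{th:sharp1:fo3} by a chain of weight-preserving transformations that reshape the hard \FO{3} sentence $\HardSentence$ into a single positive clause, together with the clause/CQ duality for the final step. Throughout I would use freely the negative weights that the paper permits, and the observation from the ``Weights and Probabilities'' discussion that $\WFOMC(\Phi,n,\cdot)$ is a polynomial of degree $n$ in each relation weight with only $n^{O(1)}$ coefficients. Consequently, access to a $\WFOMC$ oracle for a \emph{fixed} sentence lets us recover, in polynomial time by interpolation, every ``cardinality-restricted'' model count, \ie the number of structures of prescribed relation cardinalities that satisfy the sentence. This turns the problem of transferring hardness into the purely combinatorial problem of making $\FOMC(\HardSentence,n)$ appear among such restricted counts of one clause.

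First I would dispose of the clause/CQ duality, which is the easy direction. Complementing a relation $R_i$ (negating every atom built from it) is a syntactic operation that turns a positive clause into a CQ, and at the level of weights it merely swaps $w_i$ and $\bar w_i$; since the normalizer $\prod_t(1+w(t))$ is computable in $n^{O(1)}$, the weighted counts of a positive clause $\HardClause$ and of its dual CQ $\HardCQ$ are interdefinable in polynomial time. Hence it suffices to exhibit hardness for one of them, and I would aim for the positive clause $\HardClause$; the CQ $\HardCQ$ then follows automatically.

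The substance is the reduction of $\FOMC(\HardSentence,n)$ to $\WFOMC(\HardClause,n,\cdot)$, which I would carry out in two stages. (i) Eliminate existential quantifiers and arbitrary Boolean structure by Skolemizing $\HardSentence$ exactly in the style of \autoref{ex:1:cont}: each existential subformula and each non-clausal connective is captured by a fresh relation symbol, pinned down by a hard constraint and paid for with a chosen weight, so that the result is a purely universal sentence $\forall \mathbf{x}\,\bigwedge_j D_j(\mathbf{x})$ whose $\WFOMC$ equals $\FOMC(\HardSentence,n)$ up to a factor computable in $n^{O(1)}$. (ii) Collapse the conjunction $\bigwedge_j D_j$ and its negative literals into a single \emph{positive} clause. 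Since the corollary places no bound on the number of logical variables, the auxiliary relations introduced here may carry extra arguments that ``select'' which original conjunct is active, and negative weights can be chosen so that, by inclusion--exclusion over the selector's extension, exactly the structures that violate some $D_j$ cancel, leaving the models of the whole conjunction identifiable in the clause's weighted count. Because of the interpolation remark, the single clause need not compute $\FOMC(\HardSentence,n)$ on the nose; it is enough that the hard quantity surface as one of the polynomially many recoverable coefficients.

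I expect stage (ii) --- fusing several clauses carrying negative literals into one positive clause while keeping the target count recoverable --- to be the main obstacle, because a single disjunction is far weaker than an arbitrary CNF, so all of the expressive load must be shifted onto the auxiliary relations and their weights; one must then verify the resulting cancellation identity, confirm that the relations used to replace negative literals can be left free and reconciled by interpolation, and check that the construction stays over a fixed finite vocabulary. The mitigating fact is that only \SHARPONE-hardness, not completeness, is required, so any polynomial-time-computable and invertible encoding of $\FOMC(\HardSentence,n)$ into the weighted count of $\HardClause$ suffices.
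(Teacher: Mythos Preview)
Your overall shape is right, and the clause/CQ duality step is correct; stage~(i) is essentially the paper's Lemma~\ref{lemma:exists}. The gap is exactly where you flag it, but stage~(ii) is missing two concrete ingredients, not one. First, you omit equality removal: $\HardSentence$ encodes a Turing machine and uses $=$ throughout (order axioms, head-uniqueness), so you need Lemma~\ref{lemma:eq}, which replaces $=$ by a fresh binary $E$, conjoins $\forall x\,E(x,x)$, and recovers the target count as the coefficient of $w_E^{\,n}$ by interpolation --- this is the interpolation move you gesture at, but it must be instantiated here. Second, and more seriously, a selector atom can indeed collapse a conjunction of clauses into a single clause (disjoin all $D_j$ with a fresh $\mathit{Sel}(\mathbf{x})$ and set $w_{\mathit{Sel}}=0$), but it does \emph{not} remove negative literals: if some relation $R$ occurs with both polarities among the $D_j$ --- as happens in $\HardSentence$, where for instance the state predicates appear both positively and negated --- the resulting clause still contains $\neg R$, and no weight on an independent auxiliary can force $R'\equiv\neg R$ inside a single disjunction.

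The paper supplies the missing piece as Lemma~\ref{lemma:neg}: each $\neg\psi(\mathbf{x})$ is replaced by a fresh atom $A(\mathbf{x})$, and three \emph{positive} clauses $(\psi\vee A)\wedge(A\vee B)\wedge(\psi\vee B)$ are conjoined, with $w_A=\bar w_A=w_B=1$ and $\bar w_B=-1$, so that worlds where $A\not\equiv\neg\psi$ cancel. After Lemmas~\ref{lemma:exists}, \ref{lemma:neg}, \ref{lemma:eq} one has a conjunction $C_1\wedge\cdots\wedge C_k$ of positive clauses without equality. For the final step the paper does not embed a selector either: it applies inclusion--exclusion at the probability level, $\Pr(\bigwedge_j C_j)=\sum_{\emptyset\ne s\subseteq[k]}(-1)^{|s|+1}\Pr(\bigvee_{j\in s}C_j)$, each summand already a single positive clause, and then packages the $2^k{-}1$ dual CQs into one fixed $\HardCQ$ by conjoining them over disjoint copies of the vocabulary and recovering each $\Pr(Q_i)$ by setting the other copies' relation probabilities to~$1$.
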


Corollary~\ref{cor:sharp1:cq} shows that the tractability result for
$\gamma$-acyclic conjunctive queries (discussed below in
Theorem~\ref{th:gamma:acyclic}) cannot be extended to all CQs.  The proof of the Corollary follows easily from three lemmas,
which are of independent interest, and which we present here; the
proofs of the lemmas are in the appendix.  We say that a vocabulary
$\sigma'$ {\em extends} $\sigma$ if $\sigma \subseteq \sigma'$, and
that a weighted vocabulary $(\sigma', \mathbf{w'}, \mathbf{\bar w'})$
{\em extends} $(\sigma, \mathbf{w}, \mathbf{\bar w})$ if $\sigma
\subseteq \sigma'$ and the tuples $\mathbf{w'}, \mathbf{\bar w'}$
extend~$\mathbf{w}, \mathbf{\bar w}$.

\begin{lemma} \label{lemma:exists} 
\begin{sloppypar}
  Let $(\sigma, \mathbf{w}, \mathbf{\bar w})$ be a weighted vocabulary
  and $\Phi$ an FO sentence over $\sigma$.  There exists an extended
  weighted vocabulary $(\sigma', \mathbf{w'}, \mathbf{\bar w'})$ and
  sentence $\Phi'$ over $\sigma'$, such that $\Phi'$ is in
  prenex-normal form with a quantifier prefix $\forall^*$, and
  $\WFOMC(\Phi,n,\mathbf{w},\mathbf{\bar w}) =
  \WFOMC(\Phi',n,\mathbf{w'},\mathbf{\bar w'})$  for all $n$.
\end{sloppypar}
\end{lemma}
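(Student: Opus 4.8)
The plan is to prove this by eliminating existential quantifiers one at a time, using a Skolemization-style construction that introduces auxiliary relations carrying negative weights, in the spirit of~\cite{DBLP:conf/kr/BroeckMD14}. First I would put $\Phi$ into prenex normal form $Q_1 z_1 \cdots Q_m z_m\, \phi$ with $\phi$ quantifier-free; this is a logical equivalence and hence preserves every model together with its weight, so WFOMC is unchanged. I then proceed by induction on the number $m$ of quantifiers, maintaining the invariant that the ``main'' sentence always has a quantifier-free matrix consisting of a single atom over a fresh relation.

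The heart of the argument is a gadget that \emph{names} an existential subformula. Given a quantifier-free $\theta(\mathbf{w}, z)$, I introduce a fresh relation $Z$ of arity $|\mathbf{w}|$ with weights $(w_Z, \bar w_Z) = (1,1)$ and a fresh ``Skolem'' relation $S$ of the same arity with weights $(w_S, \bar w_S) = (1,-1)$, together with the three universal sentences
\begin{align*}
\text{(I)}\quad & \forall \mathbf{w}\,\forall z\,\bigl(\theta(\mathbf{w},z) \wedge S(\mathbf{w}) \Rightarrow Z(\mathbf{w})\bigr),\\
\text{(II)}\quad & \forall \mathbf{w}\,\bigl(S(\mathbf{w}) \vee Z(\mathbf{w})\bigr),\\
\text{(III)}\quad & \forall \mathbf{w}\,\forall z\,\bigl(\theta(\mathbf{w},z) \Rightarrow S(\mathbf{w})\bigr).
\end{align*}
The claim is that, for every interpretation of the original relations and of $Z$, summing the weight contribution of $S$ over its interpretations yields, independently for each tuple $\mathbf{w}$, the factor $1$ when $Z(\mathbf{w})$ agrees with the truth value of $\exists z\,\theta(\mathbf{w},z)$ and the factor $0$ otherwise. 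Because these sentences are universally quantified and $S$ has the same arity as $Z$, the sum factors over the tuples $\mathbf{w}$, and I would establish the claim by a four-case check on the pair (is there a witness $z$?, value of $Z(\mathbf{w})$); the only case requiring the negative weight $\bar w_S = -1$ is the spurious one where $Z(\mathbf{w})$ is true but no witness exists, in which the two free choices of $S(\mathbf{w})$ cancel. Consequently $Z$ is forced to be equivalent to $\exists z\,\theta$ while contributing weight $1$, so replacing $\exists z\,\theta(\mathbf{w},z)$ by the atom $Z(\mathbf{w})$ in any surrounding sentence preserves WFOMC.

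With the gadget in hand the induction is routine. In the prenex form I take the innermost quantifier $Q_m z_m$, whose matrix $\phi$ is quantifier-free. If $Q_m = \exists$, I apply the gadget to name $\exists z_m\,\phi$ by a fresh atom $T(z_1,\dots,z_{m-1})$; if $Q_m = \forall$, I rewrite $\forall z_m\,\phi$ as $\neg\exists z_m\,\neg\phi$, name $\exists z_m\,\neg\phi$ by a fresh atom $T'$, and use $\neg T'$, so a single gadget handles both cases. In either case the main sentence becomes $Q_1 z_1 \cdots Q_{m-1} z_{m-1}$ applied to a quantifier-free atom, i.e.\ it has one fewer quantifier and still a quantifier-free matrix, while all newly added definitional sentences are already purely universal; the induction hypothesis then applies to the main sentence with its reduced prefix. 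When the prefix is exhausted the main sentence is quantifier-free, and $\Phi'$ is the conjunction of it with all the universal definitional sentences, which I rename to pairwise-disjoint variables and fold into a single $\forall^*$ prenex sentence. The extended weighted vocabulary $(\sigma', \mathbf{w'}, \mathbf{\bar w'})$ adds only the fresh relations with the weights above and leaves the weights of $\sigma$ untouched, so it extends $(\sigma, \mathbf{w}, \mathbf{\bar w})$, and the whole construction is independent of $n$.

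I expect the main obstacle to be the correctness of the negative-weight gadget: aligning the three sentences and the weight $(1,-1)$ so that the per-tuple sum over $S$ collapses exactly to the indicator of $Z(\mathbf{w}) \Leftrightarrow \exists z\,\theta(\mathbf{w},z)$. The subtlety is an inherent asymmetry: a universal sentence over $z$ can only \emph{detect} the presence of a witness (forcing relations to hold when $\exists z\,\theta$ is true) but cannot fire when no witness exists, so the backward implication $Z(\mathbf{w}) \Rightarrow \exists z\,\theta(\mathbf{w},z)$ cannot be enforced by any hard universal clause and must instead be obtained through the weight cancellation that $\bar w_S = -1$ provides.
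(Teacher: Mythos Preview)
Your proof is correct, but it takes a different route from the paper's. The paper eliminates the \emph{outermost} existential at each step: writing $\Phi=\forall\mathbf{x}\,\exists x_i\,\varphi(\mathbf{x},x_i)$, it introduces a single fresh relation $A$ with weights $(w_A,\bar w_A)=(1,-1)$ and replaces $\Phi$ by $\forall\mathbf{x}\,\forall x_i\,(\neg\varphi(\mathbf{x},x_i)\vee A(\mathbf{x}))$. In ``good'' worlds (models of $\Phi$) every $A(\mathbf{a})$ is forced true and contributes factor $1$; in ``bad'' worlds some $\mathbf{a}$ leaves $A(\mathbf{a})$ unconstrained and an involution flipping that tuple pairs worlds of opposite sign. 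Only the forward implication is needed because the surrounding context is already $\forall\mathbf{x}$. You instead work \emph{innermost}-first and build a genuine biconditional $Z(\mathbf{w})\Leftrightarrow\exists z\,\theta(\mathbf{w},z)$, at the price of two fresh relations and three clauses per step, and you must process every quantifier (handling $\forall$ via double negation) to keep $\theta$ quantifier-free. The paper's argument is leaner---one relation and one clause per step, inducting on the position of the first $\exists$---while your gadget is more modular: it yields a named atom equivalent to the existential subformula, usable in an arbitrary surrounding context rather than only under a $\forall^*$ prefix. Both reductions are independent of $n$ and extend the weighted vocabulary as required.
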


\begin{sloppypar}
  This lemma was proven by \cite{DBLP:conf/kr/BroeckMD14}, and says that all
  existential quantifiers can be eliminated.  The main idea is to
  replace a sentence of the form $\forall \mathbf{x} \, \exists y \,
  \psi(\mathbf{x},y)$ by $\forall \mathbf{x} \, \forall y \,(\neg
  \psi(\mathbf{x}, y) \vee A(\mathbf{x}))$, where $A$ is a new
  relational symbol of arity $|\mathbf{x}|$ and with weights $w_A=1,
  \bar w_A=-1$.  For every value $\mathbf{x}=\mathbf{v}$, in a world
  where $\exists y \, \psi(\mathbf{v},y)$ holds, $A(\mathbf{v})$ holds
  too and the new symbol contributes a factor $+1$ to the weight; in a
  world where $\exists y \, \psi(\mathbf{v},y)$ does not hold, then
  $A(\mathbf{v})$ may be true or false, and the weights of the two
  worlds cancel each other out.
\end{sloppypar}

Note that the lemma tells us nothing about the model count of
$\Phi$ and $\Phi'$, since in $\Phi'$ we are forced to set some
negative weights.  If we had $\FOMC(\Phi,n)=\FOMC(\Phi',n)$, then we
could reduce the satisfiability problem for an arbitrary FO sentence
$\Phi$ to that for a sentence with a $\forall^*$ quantifier prefix,
which is impossible, since the former is undecidable while the latter
is decidable.

The next lemma, also following the proof in \cite{DBLP:conf/kr/BroeckMD14}, says that all
negations can be eliminated.

\begin{lemma} \label{lemma:neg}
\begin{sloppypar}
  Let $(\sigma, \mathbf{w}, \mathbf{\bar w})$ be a weighted vocabulary
  and $\Phi$ a sentence over $\sigma$ in prenex-normal form with
  quantifier prefix $\forall^*$.  Then there exists an extended
  weighted vocabulary $(\sigma', \mathbf{w'}, \mathbf{\bar w'})$ and a
  positive FO sentence $\Phi'$ over $\sigma'$, also in prenex-normal
  form with quantifier prefix $\forall^*$, s.t.\ $\WFOMC(\Phi, n,
  \mathbf{w}, \mathbf{\bar w}) = \WFOMC(\Phi', n, \mathbf{w'},
  \mathbf{\bar w'})$ for all $n$.
\end{sloppypar}
\end{lemma}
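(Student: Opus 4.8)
The plan is to eliminate negated relational atoms one relation at a time, reusing the cancellation idea behind Lemma~\ref{lemma:exists}. By hypothesis $\Phi = \forall \mathbf{x}\, \psi(\mathbf{x})$ with $\psi$ quantifier-free. Fix a relation $R$ that occurs negated in $\psi$. I would introduce a fresh relation $\bar R$ of the same arity, intended to denote the complement of $R$, and form $\psi^+$ by replacing every negative literal $\neg R(\mathbf{t})$ with $\bar R(\mathbf{t})$ (equality literals are interpreted as fixed Booleans and need no change). The goal is to force $\bar R = \neg R$ in every surviving world and to choose weights so that the weighted count is unchanged.

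The naive attempt---adding only the positive clause $\forall\mathbf{x}(R(\mathbf{x}) \vee \bar R(\mathbf{x}))$ to guarantee $\bar R \supseteq \neg R$---fails, and this is the crux: because $\psi^+$ is monotone in the (now positively occurring) atom $\bar R$, setting $\bar R(\mathbf{a})$ true on a tuple where $R(\mathbf{a})$ is already true can only create extra satisfying worlds, and per-tuple weights cannot cancel these away, since the induced change in the truth value of $\psi^+$ depends on exactly which tuples are perturbed. I would resolve this exactly as $\exists$ is handled in Lemma~\ref{lemma:exists}: introduce a second fresh relation $T$ of the same arity that does \emph{not} occur in $\psi^+$, with weights $w_T = 1$, $\bar w_T = -1$, and add the two positive clauses $\forall\mathbf{x}(R(\mathbf{x}) \vee T(\mathbf{x}))$ and $\forall\mathbf{x}(\bar R(\mathbf{x}) \vee T(\mathbf{x}))$. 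These force $T(\mathbf{a})$ true whenever at least one of $R(\mathbf{a}), \bar R(\mathbf{a})$ is false, and leave $T(\mathbf{a})$ free precisely on the ``bad'' tuples where $R(\mathbf{a}) \wedge \bar R(\mathbf{a})$ holds. Since a free $T(\mathbf{a})$ contributes the factor $w_T + \bar w_T = 0$ and the truth of the whole formula is independent of it, every world containing a bad tuple is annihilated; combined with the clause $\forall\mathbf{x}(R(\mathbf{x}) \vee \bar R(\mathbf{x}))$, the surviving worlds satisfy both $R \vee \bar R$ and $\neg(R \wedge \bar R)$, i.e.\ $\bar R = \neg R$ exactly.

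Taking $\Phi'$ to be the conjunction of $\forall\mathbf{x}\,\psi^+(\mathbf{x})$ with the three clauses above (merged under a single $\forall^*$ prefix after renaming, keeping the sentence positive), it remains to verify the weights. I would keep $w'_R = w_R$, $\bar w'_R = \bar w_R$ and set $w'_{\bar R} = \bar w'_{\bar R} = 1$. Then in the unique faithful extension of an original world, a tuple with $R(\mathbf{a})$ true contributes $w_R \cdot \bar w'_{\bar R} \cdot w_T = w_R$ and a tuple with $R(\mathbf{a})$ false contributes $\bar w_R \cdot w'_{\bar R} \cdot w_T = \bar w_R$, matching the original per-tuple factors, while $\psi^+$ agrees with $\psi$ because $\bar R = \neg R$. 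This yields a weight-preserving bijection between the models of $\Phi$ and the faithful models of $\Phi'$, and the non-faithful extensions sum to $0$, so $\WFOMC(\Phi,n,\mathbf{w},\mathbf{\bar w}) = \WFOMC(\Phi',n,\mathbf{w'},\mathbf{\bar w'})$ for all $n$. Iterating over all relations that occur negated---each step introduces only positive occurrences of $\bar R$ and $T$, strictly shrinking the set of negated relations---produces a fully positive $\forall^*$ sentence and completes the proof. The main obstacle is precisely the monotonicity failure of the one-clause approach, and the key step is the $(1,-1)$-weighted relation $T$ that zeroes out the overlap worlds.
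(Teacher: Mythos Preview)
Your proof is correct and essentially the same as the paper's. The paper also introduces, for each negated piece $\neg\psi$, two fresh relations $A,B$ with weights $w_A=\bar w_A=w_B=1$, $\bar w_B=-1$ and adds the three positive clauses $(\psi\vee A)\wedge(A\vee B)\wedge(\psi\vee B)$; your $\bar R$ and $T$ are exactly $A$ and $B$ when $\psi$ is the atom $R(\mathbf{x})$, and your three clauses $(R\vee\bar R)$, $(\bar R\vee T)$, $(R\vee T)$ match the paper's clause for clause. The only cosmetic difference is that you process all negated occurrences of a given relation $R$ at once (after implicitly putting the matrix in NNF), whereas the paper phrases the step per negated subformula, but the construction and cancellation argument are identical.
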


The idea is to create two new relational symbols $A,B$ for every
negated subformula $\neg \psi(\mathbf{x})$, replace the formula by
$A(\mathbf{x})$, and add the sentence $\forall \mathbf{x}
(\psi(\mathbf{x}) \vee A(\mathbf{x}))\wedge (A(\mathbf{x}) \vee
B(\mathbf{x})) \wedge (\psi(\mathbf{x}) \vee B(\mathbf{x}))$.  By
setting the weights $w_A=\bar w_A = w_B =1$, $\bar w_B=-1$ we ensure
that, for every constant $\mathbf{x}=\mathbf{v}$, either $\neg
\psi(\mathbf{v})\equiv A(\mathbf{v})$, in which case $B(\mathbf{v})$
is forced to be true and the two new symbols contribute a factor $+1$
to the weight, or $\psi(\mathbf{v}) \equiv A(\mathbf{v}) \equiv
\texttt{true}$, in which case $B(\mathbf{v})$ can be either true or
false, and the weights cancel out.

Finally, we remove the $=$ predicate.

\begin{lemma} \label{lemma:eq} Let $(\sigma, \mathbf{w}, \mathbf{\bar
    w})$ be a weighted vocabulary and $\Phi$ a sentence over $\sigma$.
  Then there exists an extended vocabulary $\sigma'$ and sentence
  $\Phi'$ without the equality predicate $=$, such that, for all $n$,
  $\WFOMC(\Phi, n, \mathbf{w}, \mathbf{\bar w})$ can be computed in
  polynomial time using $n+1$ calls to an oracle for $\WFOMC(\Phi', n,
  \mathbf{w'}, \mathbf{\bar w'})$, where $(\sigma', \mathbf{w'},
  \mathbf{\bar w'})$ is an extension of $(\sigma, \mathbf{w},
  \mathbf{\bar w})$.
\end{lemma}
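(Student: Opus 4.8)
The plan is to remove $=$ by replacing it with a fresh binary relation that is forced, using only equality-free sentences, to be an equivalence relation, and then to recover the contribution of \emph{genuine} equality --- the identity relation --- from the oracle by a short interpolation. Concretely, introduce a fresh symbol $E \notin \sigma$ and let $\Phi_E$ be $\Phi$ with every equality atom $x=y$ rewritten as $E(x,y)$; this sentence uses no $=$. Set $\sigma' = \sigma \cup \set{E}$ and $\Phi' \eqdef \Phi_E \wedge \text{Refl} \wedge \text{Sym} \wedge \text{Trans}$, where $\text{Refl} = \forall x\, E(x,x)$, $\text{Sym} = \forall x\forall y\,(E(x,y) \Rightarrow E(y,x))$ and $\text{Trans} = \forall x\forall y\forall z\,(E(x,y)\wedge E(y,z) \Rightarrow E(x,z))$ are equality-free hard constraints. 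Over a domain of size $n$, any interpretation of $E$ satisfying $\Phi'$ is exactly an equivalence relation, i.e.\ a partition $P$ of the domain, and in the world where $P$ is the identity we have $E(a,b)$ true iff $a=b$, so $\Phi_E$ coincides with $\Phi$ there.

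The combinatorial heart is to single out the identity partition. Grouping the models of $\Phi'$ by their equivalence relation $P$, the group $P = \text{identity}$ contributes precisely $\WFOMC(\Phi,n,\mathbf{w},\mathbf{\bar w})$, since on the diagonal $E$ agrees with true equality and the $\sigma$-relations then range freely with their original symmetric weights. The identity is distinguished extremally: among all equivalence relations on an $n$-element domain it is the unique one with the maximal number $n$ of classes, equivalently the unique one with the minimal number $\sum_i n_i^2 = n$ of related ordered pairs (as $\sum_i n_i^2 \ge \sum_i n_i = n$, with equality iff every block is a singleton). The plan is therefore to treat the weight of $E$ (and one auxiliary symbol) as a formal parameter, so that $\WFOMC(\Phi',n,\cdot)$ becomes a polynomial in that parameter whose distinguished, extremal monomial has coefficient $\WFOMC(\Phi,n,\mathbf{w},\mathbf{\bar w})$, and then to read that coefficient off by evaluating the oracle at several weight values and solving the resulting linear system.

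To keep the number of evaluations at $n+1$, I would parametrize by a quantity ranging only over $\set{0,1,\dots,n}$, namely the number of equivalence classes, rather than by $\sum_i n_i^2$, which would force degree $\Theta(n^2)$. This is arranged with an auxiliary unary marker whose cardinality tracks the number of classes, together with a compensating weight; interpolating in the marker's weight makes $\WFOMC(\Phi',n,\cdot)$ a univariate polynomial of degree $n$, so $n+1$ oracle calls determine all of its coefficients, in particular the one for $n$ classes (the identity). All calls stay at the fixed domain size $n$: an ``effective smaller domain'' is simulated by conditioning on the cardinality of a unary predicate, and conditioning on each cardinality $k$ is itself obtained from the same degree-$n$ interpolation. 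Inverting the (triangular / Vandermonde-type) coefficient matrix is then a polynomial-time, closed-form computation.

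The main obstacle is the \emph{weighted} case. For a non-identity partition $P$, a single tuple over the classes lifts to several ground tuples over the $n$-element domain, so its symmetric weight $w_i$ is raised to a block-size-dependent number of preimages; this distortion is exactly what blocks the one-line Stirling-style inversion that would suffice for unweighted $\FOMC$. The delicate part is to choose the marker and the compensating weights so that, despite this distortion, (i) the identity partition still contributes the \emph{undistorted} value $\WFOMC(\Phi,n,\mathbf{w},\mathbf{\bar w})$, and (ii) the vector of oracle values is an invertible linear image of the targeted per-class quantities, with the interpolation variable of degree exactly $n$. Establishing this invertibility, and that the combining coefficients are polynomial-time computable, is what I expect to require the most care; note that the earlier lemmas (elimination of $\exists$ and of negation) are not needed here, as this construction applies to an arbitrary $\Phi$ with equality.
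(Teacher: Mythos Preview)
Your high-level plan---replace $=$ by a fresh binary $E$, constrain $E$ so that the identity relation is an extremal case, and recover that case by interpolation in a weight parameter---is exactly the paper's strategy. But you have made the construction harder than necessary, and the complications you yourself flag are artifacts of that overcomplication rather than intrinsic, and they are not resolved in your proposal.

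The paper adds \emph{only} reflexivity: $\Phi' = \Phi_E \wedge \forall x\, E(x,x)$, with $w'_E = z$ and $\bar w'_E = 1$. Reflexivity alone already forces $|E|\ge n$, with equality iff $E$ is precisely the diagonal. Hence $\WFOMC(\Phi',n,\mathbf{w'},\mathbf{\bar w'})$, viewed as a function of $z$, is a polynomial whose lowest-degree monomial is $c\cdot z^n$ with $c=\WFOMC(\Phi,n,\mathbf{w},\mathbf{\bar w})$; one reads off $c$ from oracle evaluations at several values of $z$. No symmetry, no transitivity, no auxiliary marker---and, crucially, no need to understand the weighted contribution of worlds with $|E|>n$: whatever those contributions are, they live in strictly higher powers of $z$ and are simply discarded when isolating the $z^n$ coefficient. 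The ``distortion'' problem you worry about never arises, because non-identity $E$'s are not grouped or interpreted at all; they are filtered out by degree.

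Your route via equivalence relations and class-counting has a concrete gap. To make a unary marker's cardinality equal the number of classes you need ``at most one marked element per class,'' i.e.\ $M(x)\wedge M(y)\wedge E(x,y)\Rightarrow x=y$, which reintroduces equality and is circular; you gesture at repairing this with ``compensating weights'' but give neither the construction nor the invertibility argument, and that is exactly the step that would carry the proof. All of that machinery is aimed solely at keeping the interpolation degree at $n$. The paper simply parametrizes by $|E|$ via $w'_E$, obtaining a polynomial of degree $n^2$; extracting a single designated coefficient still costs only polynomially many oracle calls, which is all the downstream application (\autoref{cor:sharp1:cq}) needs.
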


\begin{sloppypar}
  The idea is to introduce a new relational symbol $E$, replace every
  atom $x=y$ with $E(x,y)$, and add the sentence $\forall x\ E(x,x)$.
  Let $\mathbf{w'}, \mathbf{\bar w'}$ be the extension of $\mathbf{w},
  \mathbf{\bar w}$ with $w'_E=z$, $\bar w'_E=1$.  Then $\WFOMC(\Phi',
  n, \mathbf{w'}, \mathbf{\bar w'})$ is a polynomial of degree $n^2$
  in $z$ where each monomial has degree $\geq n$ in $z$, because the
  hard constraint $\forall x\ E(x,x)$ forces $|E| \geq n$.  Moreover,
  the coefficient of $z^n$ is precisely
  $\WFOMC(\Phi,n,\mathbf{w},\mathbf{\bar w})$, because that
  corresponds to the worlds where $|E|=n$, hence it coincides with
  $=$.  We compute this coefficient using $n+1$ calls to an oracle for
  $\WFOMC(\Phi', n, \mathbf{w'}, \mathbf{\bar w'})$.
\end{sloppypar}

Now we give the proof of Corollary \ref{cor:sharp1:cq}.  Starting with
the \SHARPONE-complete sentence $\HardSentence$, we apply the three
lemmas and obtain a positive sentence $\Phi$, with quantifier prefix
$\forall^*$ and without the equality predicate, that is
\SHARPONE-hard.  We write it as a conjunction of clauses, $\Phi=C_1
\wedge C_2 \wedge \cdots \wedge C_k$ (recall that a clause is
universally quantified), and then apply the inclusion-exclusion
formula: $\Pr(\Phi) = \sum_{s \subseteq [k], s\neq
  \emptyset}(-1)^{|s|+1}\Pr(\bigvee_{i\in s} C_i)$.  Since any
disjunction of clauses is equivalent to a single clause, we have
reduced the computation problem $\Pr(\Phi)$ to computing the
probabilities of $2^k-1$ clauses. By duality, this reduces to
computing the probabilities of $2^k-1$ conjunctive queries, $\Pr(Q_1),
\Pr(Q_2), \ldots, \Pr(Q_{2^k-1})$.  We can reduce this problem to that
of computing the probability of a single conjunctive query $\HardCQ$,
by the following argument.  Create $2^k-1$ copies of the relational
symbols in the FO vocabulary, and take the conjunction of all queries,
where each query uses a fresh copy of the vocabulary.  Then
$\Pr(Q_1\wedge \cdots \land Q_{2^k-1}) = \Pr(Q_1) \cdots \Pr(Q_{2^k-1})$,
because now every two distinct queries $Q_i, Q_j$ have distinct
relational symbols.  Using an oracle to compute the probability of
$\HardCQ \eqdef \bigwedge_i Q_i$, we can compute any $\Pr(Q_i)$ by
setting to 1 the probabilities of all relations occurring in $Q_j$,
for $j\neq i$: in other words, the only possible world for a relation
$R$ in $Q_j$ is one where $R$ is the cartesian product of the domain;
assuming $n \geq 1$, $Q_j$ is true, $\Pr(Q_j) = 1$, and hence
$\Pr(\HardCQ) = \Pr(Q_i)$.  We repeat this for every $i$ and compute
$\Pr(Q_1), \ldots, \Pr(Q_{2^k-1})$.  This proves that the CQ $\HardCQ$ is \SHARPONE-hard.  Its dual, $\HardClause$, is a
\SHARPONE-hard positive clause without equality. This proves
Corollary~\ref{cor:sharp1:cq}.

\subsection{Upper Bounds}

A CQ is {\em without self-joins} if all atoms refer to
distinct relational symbols.  
It is standard to associate a hypergraph with CQs, where the variables are nodes, and the atoms
are hyper-edges.  
We define a $\gamma$-acyclic conjunctive query
to be a conjunctive query w/o self-joins whose associated hypergraph
is $\gamma$-acyclic. 
We prove:
\begin{theorem} \label{th:gamma:acyclic} 
\begin{sloppypar}
  The data complexity of Symmetric WFOMC for $\gamma$-acyclic
  CQs is in PTIME.
\end{sloppypar}
\end{theorem}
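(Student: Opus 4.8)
The plan is to prove \autoref{th:gamma:acyclic} by structural induction on the hypergraph of the query, reducing at each step to strictly smaller symmetric WFOMC instances while preserving $\gamma$-acyclicity. I pass freely between $\WFOMC$ and the probability $\Pr(Q)$ of the query in the symmetric tuple-independent model, using the weight--probability correspondence of \autoref{sec:background}. The base case is a query all of whose atoms range over the \emph{same} tuple of variables $\mathbf x$: the groundings over the $n^{|\mathbf x|}$ substitutions touch pairwise disjoint sets of ground tuples and are therefore independent, so $\Pr(Q) = 1 - (1 - \prod_i p_i)^{n^{|\mathbf x|}}$, where $p_i$ is the marginal probability of a single tuple of the $i$-th relation; this is polynomial-time computable. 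A single atom is the special case $|\{R_i\}| = 1$.

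The inductive step applies one of three reduction moves and recurses. (i) \emph{Decomposition}: if the atoms partition into two groups sharing no variable, the corresponding subqueries $Q_1, Q_2$ mention disjoint relations, hence are independent, and $\Pr(Q) = \Pr(Q_1)\,\Pr(Q_2)$. (ii) \emph{Projection}: if a variable $v$ occurs in exactly one atom $R(\mathbf u, v)$, then $\exists v\, R(\mathbf u, v)$ acts as a fresh symmetric relation $R'(\mathbf u)$ whose instances are independent across the values of $\mathbf u$ (they touch disjoint $R$-tuples) and whose weight is given by the closed form of the base case; replacing $R(\mathbf u, v)$ by $R'(\mathbf u)$ eliminates $v$ and lowers the arity. (iii) \emph{Unary case analysis}: if $Q$ contains a unary atom $R(x)$, I sum over the number $k$ of elements where $R$ holds, weighting the $k$-th summand by ${n \choose k}\,w_R^{\,k}\,\bar w_R^{\,n-k}$ and partitioning the domain into an $R$-sort of size $k$ and its complement; this restricts the occurrences of $x$ to the $R$-sort and resolves the atom $R(x)$, after which $x$ is typically a single-occurrence variable eliminated by move (ii). All relations stay symmetric within each combination of sorts, so every subproblem is again a (sorted) symmetric WFOMC instance, handled by the standard cell bookkeeping of lifted inference.

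The role of $\gamma$-acyclicity is to guarantee that, in any connected component that is not already in the base-case form, move (ii) or move (iii) applies and leaves the residual query $\gamma$-acyclic. Here I would invoke an elimination characterization of $\gamma$-acyclicity in the spirit of Fagin's and D'Atri--Moscarini's reduction rules: a hypergraph is $\gamma$-acyclic iff it can be reduced to the empty hypergraph by repeatedly deleting a \emph{nested} vertex, i.e.\ a vertex whose incident edges are linearly ordered by inclusion. A nested vertex whose smallest incident edge is a singleton is precisely a variable in a unary atom (enabling move (iii)); a nested vertex with a single incident edge is a single-occurrence variable (enabling move (ii)); and peeling the bottom edge off a nested vertex shrinks its nesting until one of these applies. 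Each reduction rule of $\gamma$-acyclicity is thus mirrored by one of the WFOMC moves, so the property is maintained along the recursion, whose depth is bounded by the number of atoms. The main obstacle is exactly this structural correspondence: one must prove the nested-elimination characterization, show that moves (i)--(iii) preserve $\gamma$-acyclicity (and not merely the weaker $\alpha$- or $\beta$-acyclicity, for which \autoref{cor:sharp1:cq} shows the theorem must fail), and verify that each unary case analysis multiplies the number of live subproblems by only $O(n)$, so that the whole recursion tree---and hence the running time---remains polynomial in $n$.
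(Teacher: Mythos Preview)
There is a genuine gap: your three moves plus the base case do not cover all $\gamma$-acyclic queries. Take $Q = R(x,y)\wedge S(x,y)\wedge T(y,z)\wedge U(y,z)$. It is $\gamma$-acyclic (apply Fagin's duplicate-edge rule twice, then alternate isolated-node and singleton-edge deletions), yet it is connected, has no unary atom, no variable occurs in only one atom, and the atoms do not all share the same variable set---so none of your moves fires. What you are missing are exactly Fagin's rules (d) and (e): merging two atoms with identical variable sets into one atom with product probability, and merging two edge-equivalent variables $x,y$ into a single fresh variable whose domain has size $n_x n_y$. The paper's proof follows Fagin's five reduction rules verbatim and supplies the WFOMC computation for each; rule~(e) in particular is why the proof is carried out in the multi-sorted setting (each variable $x_i$ with its own domain size $n_i$) that you only allude to.

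Separately, the ``nested-vertex'' elimination you invoke is the known characterization of $\beta$-acyclicity (nest-point elimination orderings), not of $\gamma$-acyclicity; since $\gamma\subsetneq\beta$ (e.g.\ $S(x,y)\wedge T(x,y,z)\wedge U(y,z)$ is $\beta$- but not $\gamma$-acyclic), it cannot serve as the invariant here. And even granting a nest point, your WFOMC moves do not implement ``delete a nest point'': move~(ii) removes a vertex lying in a \emph{single} edge, move~(iii) removes a singleton \emph{edge}, and the ``peeling'' step has no WFOMC counterpart among your moves. The repair is not a new characterization but to add the two missing moves and follow Fagin's rules directly, which is precisely the paper's argument.
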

Fagin's definition of
$\gamma$-acyclic hypergraphs~\cite{DBLP:journals/jacm/Fagin83} is reviewed in the proof of Theorem~\ref{th:gamma:acyclic}.  

An open problem is to characterize the conjunctive queries without
self-joins that are in polynomial time.  While no such query has yet
been proven to be hard ($\HardCQ$ in Corollary~\ref{cor:sharp1:cq} has
self-joins), it is widely believed that, for any $k \geq 3$, the
symmetric WFOMC problem for a {\em typed cycle} of length $k$, $C_k =
\exists x_1 \cdots x_k (R_1(x_1,x_2),\allowbreak
R_2(x_2,x_3),\allowbreak\dots,\allowbreak R_k(x_k,x_1))$, is hard.  
\begin{figure*}[t]
  \centering
  \includegraphics[width=\textwidth]{./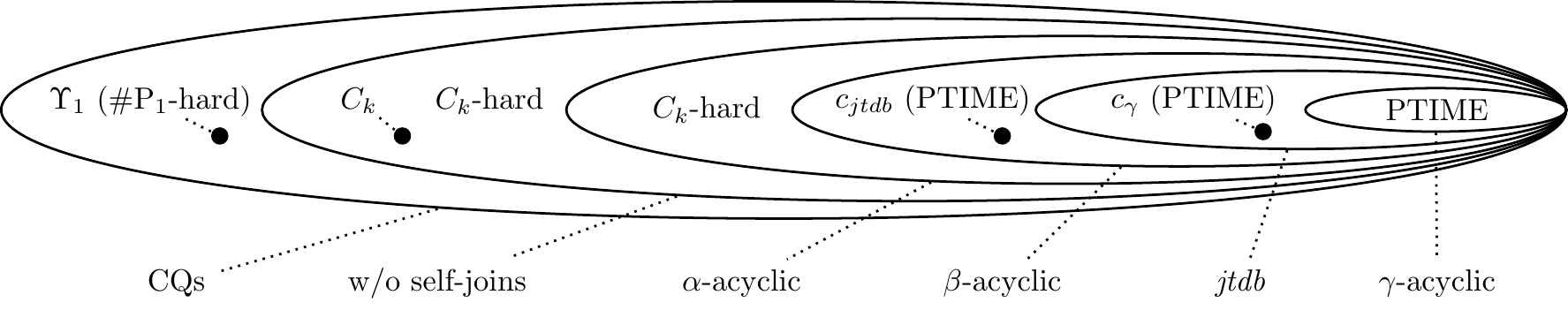}
  \caption{A summary of data complexity results for conjunctive
    queries (or positive clauses). $C_k$-hardness is an informal concept described in the main text. }
  \label{fig:venn}
\end{figure*} 
We
discuss here several insights into finding the tractability border for
conjunctive queries, summarized in
Figure~\ref{fig:venn}.

This boundary does not lie at $\gamma$-acyclicity: the query $c_\gamma
= R(x,z), S(x,y,z), T(y,z)$ is $\gamma$-cyclic (with cycle $RxSyTzR$;
see Fagin~\cite{DBLP:journals/jacm/Fagin83}), yet it still has PTIME
data complexity. The key observation is that $\gamma$-cycles allow the
last variable $z$ to appear in all predicates, turning it into a
\emph{separator variable}~\cite{DBLP:journals/jacm/DalviS12}, hence
$\Pr(Q) = \prod_{a \in [n]} \Pr(Q[a/z])$, which is $[\Pr(Q[a/z])]^n$
by symmetry; $Q[a/z]$ is isomorphic to the query in
Table~\ref{table:summary} and can be computed in polynomial time.  A
weaker notion of acyclicity, called \emph{jtdb} (for join tree with
disjoint branches), can be found in~\cite{duris2012some}. It also does
not characterize the tractability boundary: \emph{jtdb} contains the
$\gamma$-cyclic query above, but it does not contain the PTIME query
$c_{jtdb} = R(x,y,z,u), S(x,y), T(x,z), V(x,u)$.

Fagin~\cite{DBLP:journals/jacm/Fagin83} defines two increasingly
weaker notions of acyclicity: $\beta$- and $\alpha$-acyclic.
$\alpha$-Acyclic queries are as hard as any conjunctive query without
self-joins.  Indeed, if $Q=\exists \mathbf{x} \varphi(\mathbf{x})$ is
a conjunctive query w/o self-joins, then the query $Q'= \exists
\mathbf{x} (A(\mathbf{x})\wedge \varphi(\mathbf{x}))$ is
$\alpha$-acyclic, where $A$ is a new relational symbol, containing all
variables of $Q$.  By setting the probability of $A$ to 1, we have
$\Pr(Q) = \Pr(Q')$. Thus, if all $\alpha$-acyclic queries have PTIME
data complexity, then all conjunctive queries w/o self-joins have
PTIME data complexity.

For all we know, $\beta$-acyclic queries could well coincide with the
class of tractable conjunctive queries w/o self joins.  We present
here some evidence that all $\beta$-cyclic queries are hard, by
reduction from typed cycles, $C_k$.  For that, we need to consider a
slight generalization of WFOMC for conjunctive queries w/o self-joins,
were each existential variable $x_i$ ranges over a distinct domain, of
size $n_i$: the standard semantics corresponds to the special case
where all domains sizes $n_i$ are equal.  We prove that for any
$\beta$-cyclic query $Q$, there exists $k$ such that
$\WFOMC(C_k,\mathbf{n},\mathbf{w},\mathbf{\bar w})$ can be reduced to
$\WFOMC(Q,\mathbf{n'},\mathbf{w'},\mathbf{\bar w'})$. Hence, the
existence of a $\beta$-cyclic query with PTIME data complexity would
imply PTIME data complexity for at least one $C_k$ (informally called
$C_k$-hardness in Fig.~\ref{fig:venn}).
The reduction is as follows. By definition, a $\beta$-cyclic query $Q$
contains a weak $\beta$-cycle~\cite{DBLP:journals/jacm/Fagin83} of the
form $R_1 x_1 R_2 x_2 \ldots x_{k-1} R_k x_k R_{k+1}$, where $k \geq
3$, all $x_i$ and $R_i$ are distinct, each $x_i$ occurs in both $R_i$
and $R_{i+1}$, but in no other $R_j$, and $R_{k+1} = R_1$.  Then, we
reduce the WFOMC for $C_k$ to that of $Q$.  First, for each relational
symbol $R_j$ in $Q$, if $R_j$ appears in the cycle then we define
$w_j' = w_j$ and $\bar w_j' = \bar w_j$, otherwise $w'_j = \bar w'_j =
1$.  Second, for all variables $x_i$ that appear in the cycle we set
their domain size $n_i$ to be the same as that of the corresponding
variable in $C_k$, otherwise we set $n_i=1$. Then $Q$ and $C_k$ have
the same WFOMC.

%

%

Finally, we 
discuss a peculiar sentence, whose
complexity we left open in~\cite{GribkofVS:UAI2014}:\\[-3ex]
\begin{theorem} \label{th:c4} 
The data complexity of the symmetric
  WFOMC problem is in PTIME for the query
  \begin{align*}
    Q_{S4} = & \forall x_1 \forall x_2 \forall y_1 \forall
    y_2(S(x_1,y_1) \vee \\ & \neg S(x_2,y_1) \vee S(x_2,y_2) \vee \neg
    S(x_1,y_2))\\[-4ex]
  \end{align*}
\end{theorem}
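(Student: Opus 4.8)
The plan is to first give an exact combinatorial description of the models of $Q_{S4}$, and then reduce the weighted count to a polynomial dynamic program. Identify a structure over a domain of size $n$ with its $n\times n$ Boolean matrix $S$, where row $i$ is the support set $\{\,j : S(i,j)\,\}$. Negating the clause, $Q_{S4}$ is \emph{violated} exactly when there are $x_1,x_2,y_1,y_2$ with $S(x_1,y_1)=0$, $S(x_2,y_1)=1$, $S(x_2,y_2)=0$, $S(x_1,y_2)=1$; this forces $x_1\neq x_2$ and $y_1\neq y_2$, so a violation is precisely a $2\times2$ submatrix equal to $\left(\begin{array}{cc}0&1\\1&0\end{array}\right)$ or $\left(\begin{array}{cc}1&0\\0&1\end{array}\right)$. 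First I would check that two rows $a,b$ admit such a submatrix iff they are \emph{incomparable} in the coordinatewise order (equivalently, neither support contains the other): if $a,b$ differ in two columns with opposite values, the submatrix is one of the two forbidden patterns, and the witnesses of non-containment conversely produce such columns. Hence $S\models Q_{S4}$ iff the row supports are pairwise nested, i.e.\ form a chain in the Boolean lattice. As a sanity check, the forbidden patterns are transpose-invariant, so this is equivalent to the columns forming a chain.

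Since every model is a full assignment to all $n^2$ cells, each has weight $w_S^{\#1}\,\bar w_S^{\,n^2-\#1}$, so it suffices to compute the single-variable polynomial $f(w)=\sum_{S\models Q_{S4}} w^{\#1}$ and then recover $\WFOMC(Q_{S4},n,w_S,\bar w_S)=\bar w_S^{\,n^2}\, f(w_S/\bar w_S)$ by homogenization. Thus the problem reduces to counting, by number of ones, the $n$-tuples of pairwise-nested subsets of $[n]$ indexed by the labeled rows.

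To count these in polynomial time I would build the chain of \emph{distinct} row-supports from the smallest to the largest while distributing the labeled rows among the levels. The key fact is that the number of size-$b$ supersets of a fixed set of size $b'$ depends only on the sizes, namely ${n-b' \choose b-b'}$, so the chain can be tracked by sizes alone. Concretely, let $T[m][b]$ be the weighted number of ways to assign $m$ labeled rows to a strict chain of supports whose top has size $b$; then
\[
T[m][b] \;=\; {n \choose b}\,w^{mb} \;+\; \sum_{k=1}^{m-1}{m \choose k}\,w^{kb}\sum_{b'=0}^{b-1}{n-b' \choose b-b'}\,T[m-k][b'],
\]
where the first term is the single-level chain and the double sum chooses the $k$ rows sitting at the new top of size $b$ over a shorter chain topped at size $b'$. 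The answer is $f(w)=\sum_{b=0}^{n}T[n][b]$. This has $O(n^2)$ states and $O(n^2)$ work per state (the entries being polynomials in $w$ of degree at most $n^2$, or evaluated at enough points), so the whole computation is polynomial in $n$.

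The step I expect to be the main obstacle is getting the labeled count right in the presence of \emph{equal} rows. The tempting shortcut---sort the rows into a chain $u_1\le\cdots\le u_n$ and observe that each column then contributes an independent threshold, giving $(1+w+\cdots+w^n)^n$ weighted sorted chains---overcounts labeled matrices by a factor $\prod_v m_v!$, where $m_v$ is the number of rows equal to a given value $v$. The dynamic program above avoids this by explicitly partitioning the rows into equal-support levels (via the ${m \choose k}$ factor) and forcing the support sizes to strictly increase between consecutive levels, so each labeled matrix is generated exactly once; I would verify correctness on $n=1,2$ (e.g.\ $f(w)=1+4w+4w^2+4w^3+w^4$ for $n=2$) before trusting the recurrence.
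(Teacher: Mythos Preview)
Your proof is correct, and the core combinatorial insight---that the models of $Q_{S4}$ are exactly the $n\times n$ Boolean matrices whose row supports form a chain under inclusion---is equivalent to what the paper uses, though the paper never states it in those words. The paper instead argues directly that every model has either a full row ($P_a$) or an all-zero column ($P_b$), which is the same fact in disguise: if the maximal row support is all of $[n]$ you have a full row, and otherwise any column outside that maximal support is empty. Your forbidden-$2\times 2$-submatrix derivation makes this characterization much cleaner and more transparent.

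Where the two proofs genuinely differ is the dynamic program. The paper alternates between peeling off all full rows (yielding $f(n_1,n_2)=\sum_k\binom{n_1}{k}w^{kn_2}g(n_1-k,n_2)$) and peeling off all empty columns (yielding the symmetric recurrence for $g$), using two mutually recursive functions on a rectangular $[n_1]\times[n_2]$ domain. You instead fix the domain at $[n]$ and build the chain of distinct row supports bottom-up, tracking only the number of rows placed so far and the size of the current top support. Your recurrence is arguably more elementary once the chain picture is in hand, and your handling of repeated rows via the $\binom{m}{k}$ partition into levels of strictly increasing size is exactly right (your $n=2$ sanity check confirms this). The paper's two-parameter recursion, on the other hand, exposes the row/column duality more symmetrically and extends naturally to the bipartite setting $[n_1]\neq[n_2]$. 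Both are $n^{O(1)}$; neither dominates the other.

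One small caveat: your homogenization $\WFOMC=\bar w_S^{\,n^2}f(w_S/\bar w_S)$ tacitly assumes $\bar w_S\neq 0$, but that boundary case is trivial (the only model is the all-ones matrix).
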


In~\cite{GribkofVS:UAI2014} we showed that $Q_{S4}$ is in PTIME under
the modified semantics, where $S$ is a bipartite graph.  This implies
that the range of the variables $x_1, x_2$ is disjoint from the range
of the variables $y_1, y_2$.  Now we extended the proof to the
standard semantics used in this paper.  What makes this query
interesting is that the algorithm used to compute it requires a subtle
use of dynamic programming, and none of the existing lifted inference
rules in the literature are sufficient to compute this query.  This
suggests that we do not yet have a candidate for a complete set of
lifted inference rules for the symmetric WFOMC.

\subsection{Proofs}


\paragraph*{Proof of Theorem~\ref{th:sharp1:fo3}}

We briefly recall the basic notions from Valiant's original
papers~\cite{DBLP:journals/tcs/Valiant79,DBLP:journals/siamcomp/Valiant79}.
A {\em counting Turing machine} is a nondeterministic TM with a read-only input
tape and a work tape, that
(magically) prints in binary, on a special output tape, the number of its
accepting computations. The class \SHARPONE{} consists of all
functions computed by some counting TM with polynomial (non-deterministic) running time and a unary
input alphabet. 
A function $f$ is \SHARPONE-hard if, for any function $g$ in
\SHARPONE{} there exists a polynomial time, deterministic TM $\DTM$ with
access to an oracle for $f$ that computes $g$.  Notice that $\DTM$'s
input alphabet is unary.  As usual, $f$ is called \SHARPONE-complete
if it is both hard, and in \SHARPONE.  

Our proof of Theorem \ref{th:sharp1:fo3} consists of two steps.  First
we construct a \SHARPONE-complete function $f$, which is computable by
a linear time counting TM $\UTM$, which we call a {\em universal}
\SHARPONE{} machine; in fact, we will define $f$ by describing $\UTM$.
A similar construction in~\cite{DBLP:journals/siamcomp/Valiant79} is
sketched too briefly to see how the particular pairing function can
work; we use a different pairing function and give full details.  To
prove \FO{3} membership, we also need to ensure $\UTM$ runs in
(nondeterministic) linear time, which requires some care given that
the input is given in unary.  Once we have defined $\UTM$, the second
step of the proof is a standard construction of an FO formula to
simulate $\UTM$: we follow
Libkin~\cite[p.~167]{DBLP:books/sp/Libkin04}, but make several changes
to ensure that the formula is in \FO{3}.  The two steps are:

\begin{lemma} \label{lemma:1} There exists a counting TM,
  $\UTM$, with a unary input alphabet, such that (i) $\UTM$ runs in linear
  time, and (ii) the function $f$ that it computes is \SHARPONE-hard.
\end{lemma}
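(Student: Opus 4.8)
The plan is to exhibit $\UTM$ as a \emph{universal} counting machine for \SHARPONE\ and to establish hardness through a single oracle call. On a unary input $1^n$, $\UTM$ first decodes $n$, through a pairing function $\pi$, into a pair $(e,m)$, where $e$ is read as the code of a nondeterministic counting machine $N_e$ (with unary input) and $m$ is an input length; it then nondeterministically simulates $N_e$ on $1^m$ and accepts on exactly those branches on which $N_e$ accepts. Consequently the function $f(1^n)$ computed by $\UTM$ equals the number of accepting computations of $N_e$ on $1^m$. To force condition~(i), $\UTM$ carries a hard budget of $c\,n$ steps for a fixed constant $c$: it halts and rejects on any branch that has not completed the simulation within the budget. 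This makes $\UTM$ run in (nondeterministic) linear time by construction, and the whole difficulty is pushed into guaranteeing that, for the inputs produced by the reduction, the budget is never exhausted, so that $f(1^n)$ records the \emph{true} accepting count of $N_e$ on $1^m$.

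The delicate point, flagged in the statement, is that the only information $\UTM$ receives is the \emph{length} $n$, so it must recover $(e,m)$ from $n$ alone, within the linear budget, and with enough of that budget left to run the simulation. First I would fix a pairing $\pi$ whose inverse is computable by a deterministic linear-time TM on a unary input: the natural choice is to devote a short, self-delimiting prefix of the $n$ ones to $e$ and $m$, and to let the remaining ones serve as a unary \emph{time budget} $T$, so that $n = \Theta(T)$ and the decode step reads the prefix and counts the rest in $O(n)$ time. A Cantor-style pairing would also work, but inverting it on a unary input forces an awkward $\sqrt{n}$ computation; the prefix/budget encoding avoids this, which is why a different pairing than in~\cite{DBLP:journals/siamcomp/Valiant79} is convenient. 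One then verifies two invariants: the decoding phase is \emph{deterministic}, so it introduces no spurious branches into the count, and after decoding the number of remaining steps is still $\Theta(n)$, leaving room for the simulation.

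With $\UTM$ in hand, I would prove condition~(ii) by reducing an arbitrary $g \in \SHARPONE$ to $f$. Fix a polynomial-time counting machine $N_{e_g}$ with running time $p(m)$ whose accepting-computation count on $1^m$ is $g(m)$. The reduction $\DTM$ may depend on $g$, hence knows the constant-size code $e_g$ and the polynomial $p$; on input $1^m$ it computes in time polynomial in $m$ the value $n=\pi(e_g,m)$ with the budget $T$ padded large enough that $c\,n$ exceeds the total $\UTM$-cost of simulating all $p(m)$-step branches of $N_{e_g}$ on $1^m$ (this cost is $p(m)$ times a per-step simulation overhead that is a constant for the fixed $e_g$, hence $O(p(m))$, so $n$ is polynomial in $m$). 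It then issues the single query $f(1^n)$. Because no branch is truncated, $f(1^n)$ equals the accepting count of $N_{e_g}$ on $1^m$, that is $g(m)$, and $\DTM$ outputs it verbatim. Since $g$ was arbitrary, $f$ is \SHARPONE-hard.

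The step I expect to be the main obstacle is the interplay between the \emph{unary} input and the linear-time requirement: choosing $\pi$ so that decoding is genuinely $O(n)$ on $1^n$ while leaving a linear simulation budget, and arranging the universal simulation so that (a)~every timed-out branch rejects, (b)~the deterministic decode contributes no extra accepting paths, and (c)~for the padded $n$ chosen by the reduction the budget provably suffices for every branch of $N_{e_g}$. Getting these bookkeeping details exactly right---so that $f(1^n)$ equals $g(m)$ on the nose rather than up to an additive or multiplicative correction---is where the care lies, and it is also what later makes the \FO{3} encoding of $\UTM$'s linear-time, linear-space computation possible.
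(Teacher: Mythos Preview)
Your high-level plan is the same as the paper's: build a universal counting machine that decodes its unary input into a machine index and an argument, simulates, and is reduced to by a single oracle call. The technical realization differs in two places. First, rather than giving $\UTM$ a global $c\cdot n$ budget and letting timed-out branches reject, the paper bakes the clock into the \emph{enumeration}: it dovetails pairs $M_i=(M'_r,s)$ where $M'_r$ is the $r$th machine in a standard listing and $s$ is a time parameter, so that $M_i$ simulates $M'_r$ under a timer of $s\cdot j^s+s$ steps; the dovetailing is arranged so that $i\ge s$, which yields the uniform bound $(i\cdot j^i+i)^2$ on $M_i$'s runtime. Second, instead of an explicit padding field $T$, the paper uses a purely arithmetic pair encoding $e(i,j)=2^{i}\,3^{4i\lceil\log_3 j\rceil}(6j+1)$ and checks directly that $e(i,j)\ge(i\cdot j^i+i)^2$, so the input length itself is already a sufficient budget; decoding is done by first writing $n$ in binary (linear time from $1^n$), reading $i$ as the number of trailing zeros, then passing to ternary to recover $j$. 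Your explicit-padding scheme and the paper's ``make the encoding itself large'' scheme accomplish the same thing; the paper's version avoids introducing a third coordinate and a triple-to-integer bijection.

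One point in your write-up does need repair. The phrase ``a short, self-delimiting prefix of the $n$ ones'' is not meaningful for a unary input: all symbols are identical, so no prefix is self-delimiting and the machine cannot tell where $(e,m)$ ends and $T$ begins. What you presumably intend---and what the paper actually does---is to first convert $1^n$ to a \emph{binary} representation of $n$ on a work tape in linear time, and then parse structure out of that binary string. Once stated this way your decoding step is fine, and the rest of your argument (deterministic decode, no spurious branches, polynomial-size $n$ for fixed $e_g$) goes through.
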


It follows immediately that $f$ is \SHARPONE-complete.

\begin{lemma} \label{lemma:2} 
\begin{sloppypar}
  Let $T$ be any counting TM with a unary input alphabet computing
  some function $f$.  Suppose $T$ runs in time $O(n^a)$. Then there
  exists an \FO{k} formula $\Phi$ over some relational vocabulary
  $\sigma$, s.t.  $f(n) = \FOMC(\Phi, n)/(n!)$, where $k = 3a$ for $a \geq 1$.
\end{sloppypar}
\end{lemma}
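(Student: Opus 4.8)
The plan is to encode an entire computation of the counting TM $T$ on input $1^n$ as a first-order structure over a domain of size $n$, and to write a sentence $\Phi$ whose models are exactly the accepting computations, counted with multiplicity $n!$. Since $T$ runs in time $O(n^a)$, both the time steps and the tape cells are indexed by numbers up to $c n^a$; the standard trick is to identify such an index with an $a$-tuple of domain elements under a lexicographic ordering, so that ``time'' and ``space'' coordinates each live in $[n]^a$. I would introduce a total order $<$ on the domain (via a relation symbol, constrained to be a strict linear order) together with first and last element constants, and derive a successor relation from it; the $n!$ factor in $f(n) = \FOMC(\Phi,n)/(n!)$ is exactly the number of ways to linearly order an unordered $n$-element domain, so a given computation of $T$ corresponds to $n!$ labeled structures, one per choice of order.

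The core of the construction follows Libkin~\cite[p.~167]{DBLP:books/sp/Libkin04}: I would have relation symbols $\States$ encoding, at each time/cell pair, the head position, the current state, and the tape symbol, and then assert the usual local consistency conditions --- a correct initial configuration on input $1^n$, a unique accepting halting configuration, and the transition relation of $T$ holding between every pair of consecutive time steps. Nondeterminism is handled by leaving the choice of transition free, so that each model records one legal computation branch; the conjunct forcing the final configuration to be accepting restricts models to accepting branches, making $\FOMC(\Phi,n)/(n!)$ equal to the number of accepting computations, i.e.\ $f(n)$. Defining successor on $[n]^a$-indices from the order $<$, and expressing ``$\mathbf{p}$ and $\mathbf{q}$ are consecutive'' for $a$-tuples, is routine but must be done carefully to respect the lexicographic carry.

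The main obstacle --- and the only part that is nonstandard --- is the variable budget: the claim is $k = 3a$, so I must simulate $T$ using at most $3a$ distinct logical variables rather than the naive number one gets from writing out $a$-tuples explicitly (a transition between two time steps, each an $a$-tuple, with two tape cells each an $a$-tuple, superficially needs far more than $3a$ variables). The key technique is \emph{variable reuse}: I would quantify over $a$-tuples by reusing the same block of variables, and crucially exploit that the transition and consistency constraints are \emph{local}, relating a cell to its neighbor and a configuration to its successor, so that at any point only a bounded number of $a$-tuples need be ``live'' simultaneously. Concretely I expect to need one block of $a$ variables for the ``current'' index, one for an adjacent/successor index, and one scratch block for comparing tuples component-by-component when defining the lexicographic successor and the order on indices --- giving the budget $3a$. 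Verifying that every subformula (initial configuration, transition validity, the accepting condition, and the derived successor predicate on index tuples) can indeed be written by requantifying within these three blocks, without ever needing a fourth live tuple, is where the care goes; the running-time bound $O(n^a)$ is what guarantees $a$ coordinates suffice to index the computation, and the hypothesis $a \geq 1$ ensures $3a \geq 3$ so that the order, successor, and simulation machinery all fit.
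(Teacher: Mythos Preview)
Your approach matches the paper's: a Trakhtenbrot-style encoding following Libkin, with a linear order on the domain supplying the $n!$ factor, $a$-tuple addressing for time and tape positions, and nondeterminism handled by leaving the choice of transition free so that models correspond to accepting computation branches. The three-blocks-of-$a$-variables accounting for the $3a$ budget is also what the paper has in mind for general $a$ (it spells out only $a=1$ in full).

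There is one real gap. You write that time and space are indexed by numbers up to $c n^a$ and then propose to identify such an index with an $a$-tuple of domain elements---but an $a$-tuple over $[n]$ names only $n^a$ values, not $c n^a$. The constant $c$ cannot be absorbed by raising the arity to $a+1$ (that would blow the variable budget to $3(a+1)$), and it cannot be ignored. The paper's fix, which you should incorporate, is to partition the $c n^a$ time steps into $c$ \emph{epochs} of $n^a$ steps each and the tape into $c$ \emph{regions} of $n^a$ cells each; one then replaces each tape/head/state relation by $c$ (respectively $c^2$) copies indexed by epoch and region, and writes the boundary-crossing transitions (end of one epoch to start of the next, head moving between adjacent regions) as separate conjuncts using $\Min$/$\Max$ predicates. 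Since $c$ depends only on $T$, this inflates the vocabulary and the sentence but not the number of logical variables. A secondary point: the paper also handles \emph{multiple} tapes---needed because the machine $U_1$ of Lemma~\ref{lemma:1} uses auxiliary work tapes---by replicating the tape and head relations per tape and normalizing the TM so that each step reads or writes only one tape; your single-tape sketch is adequate for the lemma as literally stated, but keep this in mind for the application to Theorem~\ref{th:sharp1:fo3}.
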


Theorem~\ref{th:sharp1:fo3} follows by applying this lemma to $U_1$,
hence $a=1$ and the formula is in \FO{3}.  By allowing
runtimes $O(n^a)$ with $a > 1$, the lemma implies: $\SHARPONE =
\setof{f}{\exists \Phi \in \mathit{FO}, \forall n: f(n) = \lfloor \FOMC(\Phi,
  n)/n!\rfloor}$; this is an extension of the classic result by Jones
and Selman~\cite{jones1972turing}, which, restated for the tally
notation says $\NPONE = \setof{\texttt{Spec}(\Phi)}{\Phi \in \mathit{FO}}$ (see \cite{fagin1974generalized},
\cite[Sec.5]{DBLP:journals/bsl/DurandJMM12}).
By considering $\FOMC$ over unlabeled structures, denoted $\UFOMC$, the correspondence becomes even stronger. In $\UFOMC$, all models that are identical up to a permutation of the constants are counted once, and $\SHARPONE =\setof{\UFOMC(\Phi,n)}{\Phi \in \mathit{FO}}$.

%
%


\begin{proof} [Proof of Lemma \ref{lemma:1}] The idea for $\UTM$ is
  simple: its input $n$ is represented in unary and encodes two
  numbers $i, j$: $n = e(i,j)$, for some encoding function $e$ to be
  defined below.  $\UTM$ first computes $i, j$ from $n$, then
  simulates the $i$th \SHARPONE{} counting TM on input $j$.  The
  difficult part is to ensure that $\UTM$ runs in linear time: every
  TM $i$ that it simulates runs in time $O(j^{k_i})$ for some exponent
  $k_i$ that depends on $i$, and thus if we construct $\UTM$ naively
  to simply simulate machine $i$ on input $j$, then its runtime is no
  longer polynomial.

  We start by describing an enumeration of counting TMs in \SHARPONE,
  $M_1, M_2, \ldots, M_i, \ldots$, with the property that $M_i$ runs
  in time $\leq (i\cdot j^i + i)^2$ on an input $j$.  
  We start by listing all counting TMs over a unary input alphabet in
  standard order $M_1', M_2', \ldots$.  Then we dove-tail pairs of the
  form $M_i=(M'_r,s)$ where $r$ is an index in the standard TM
  order and $s$ is a number.  $M_i$ represents the counting TM that
  simulates $M'_r$ on input $j$ with a timer for $s \cdot j^s + s$
  steps.  The machine $M_i$ can be constructed with at most quadratic
  slowdown over $M'_r$ (due to the need to increment the counter).  We
  further ensure that dovetailing $M_i=(M'_r,s)$ is done such that $i
  \geq s$; for that, it suffices to advance $r$ 
  in such a way that $i$ advances at least as fast as
  $s$, that is, $M_1 = (M'_1,1), M_2 = (M'_2,1), M_3 = (M'_1,2), M_4 =
  (M'_2,2), M_5 = (M'_1,3), \ldots$.  It follows that, for every $i$,
  the runtime of $M_i$ on input $j$ is $\leq (i\cdot j^i + i)^2$.  It
  remains to show that the list $M_1, M_2, \ldots, M_i, \ldots$
  enumerates precisely all \SHARPONE{} functions.  Indeed, each
  function in this list is in \SHARPONE, because the runtime of $M_i$
  is polynomial in the input $j$.  Conversely, every function in
  \SHARPONE{} is computed by some $M_i$ in our list, because it is
  computed by some $M_r'$ whose runtime on input $j$ is $\leq a_r
  \cdot j^{k_r} + b_r$ and this is $\leq s \cdot j^s + s$ if we choose
  $s \eqdef \max(a_r,b_r,k_r)$.  This completes the construction of
  the enumeration $M_1, M_2, \ldots$.

  We describe now the counting machine $\UTM$. Its input is a number $n$
  in unary, which represents an encoding $n = e(i,j)$
  of two integers $i,j$.  We will choose the encoding function $e$ below
  such that it satisfies three properties: (a) $\UTM$ can compute $i,j$
  from $n = e(i,j)$ in linear time (with auxiliary tapes), (b) $e(i,j)
  \geq (i \cdot j^i + i)^2$, and (c) for every fixed $i$, the function
  $j \mapsto e(i,j)$ can be computed in PTIME.  We first prove the
  lemma, assuming that $e$ satisfies these three properties. 

  The counting machine $\UTM$ starts by computing a
  binary representation of its unary input $n$ on its work tape: this step takes
  linear time in $n$.
  Next, it extracts $i, j$ in linear time in $n$ (by property (a)), then it simulates $M_i$
  on input $j$.
  The runtime of the last step is $\leq (i \cdot j^i + i)^2 \leq
  e(i,j)$ (by property (b)), hence $\UTM$ runs in linear time in the
  input $n = e(i,j)$.  It remains to prove that the function $f$
  computed by $\UTM$ is \SHARPONE-hard.  Consider any function $g$ in
  \SHARPONE: we will describe a polynomial-time, deterministic Turing
  machine $\DTM$ with an oracle for $f$ that computes $g$.  Since $g$ is
  in \SHARPONE{} there exists $i$ such that $g$ is computed by $M_i$.  On
  input $j$, $\DTM$ computes $n = e(i,j)$ in PTIME (by property (c)),
  stores it on the oracle tape, then invokes $\UTM$ and obtains the
  result $g(j) = f(n)$.

  It remains to describe the encoding function $e$.  We take $e(i,j) =
  2^{i} 3^{4i \cdot \lceil \log_3 j \rceil} (6j+1)$.  To prove (a), note
  that $i$ is obtained by counting the trailing zeroes in the binary
  representation of $n$, $j$ is obtained by first computing a ternary
  representation of $3^{4i \cdot \lceil \log_3 j \rceil} (6j+1)$,
  ignoring trailing zeros and deriving $j$ from $6j+1$.  (b) $2^{i} 3^{4i
    \cdot \lceil \log_3 j \rceil} (6j+1) \geq  (i \cdot j^i + i)^2$
  follows through direct calculations, using the fact that $3^{4i \cdot
    \lceil \log_3 j \rceil} \geq j^{4i}$.  (c) is straightforward.
\end{proof}

\begin{proof} [Proof of Lemma
  \ref{lemma:2}] \label{s:encodingddiscussion} We describe here the
  most important steps of the proof, and delegate the details to
  Appendix~\ref{app:encoding}.  We will consider only the case $k=1$,
  i.e. the counting TM runs in linear time: the case when $k>1$ is
  handled using a standard technique that encodes $n^k$ time stamps
  using a relation of arity $k$.  We briefly review Trakhtenbrot's
  proof from Libkin~\cite[p.~167]{DBLP:books/sp/Libkin04}: for every
  deterministic TM, there is a procedure that generates a formula $\HardSentence$ such that
  TM has an accepting computation starting with an empty input tape
  iff $\HardSentence$ is satisfiable.
  The signature for
  $\HardSentence$ is (this is a minor variation over Libkin's):
\begin{align*}
  \sigma = &\set{<, \Min, T_0, T_1, H, (S_q)_{q \in \States(T)}}
\end{align*}
Then $\HardSentence$ states that (1) $x<y$ is a total order on the domain and
$\Min(x)$ is its
minimum element, (2) $T_0(t,p)$ (or $T_1(t,p)$) is true iff at time $t$ the tape
has a 0 (or a 1) on position $p$, (3) $H(t,p)$ is true iff at time $t$
the head is on position $p$, and $S_q(t)$ is true iff at time $t$ the
machine is in state $q$.  Libkin~\cite{DBLP:books/sp/Libkin04} describes the
sentence $\HardSentence$ that states that all these constraints are
satisfied.

We adapt this to a more general construction that is sufficient to
prove Lemma~\ref{lemma:2}.  We 
address five changes: (1) Our TM is non-deterministic, 
(2) has $k$ tapes instead of 1, (3) its runtime is
$c\cdot n$ instead of $n$, for some $c>1$, (4) the input tape
initially contains $n$ symbols 1, and (5) $\HardSentence$ needs to be
in \FO{3}.

\begin{sloppypar}
Support for non-deterministic transitions requires only a slight
modification to the sentences.  It is also easy to represent multiple
tapes, by using $k$ different relations $T_{0\tau_i}, T_{1\tau_i}$,
and similarly $k$ head relations $H_{\tau_i}$, for $i=1,k$.  To encode
transitions in \FO{3}, we will assume that the multi-tape TM always
reads or writes only one tape at each time.  This is without loss of
generality: a state that reads and writes all tapes can be converted
into a sequence of $2k$ states that first read one by one each tape
and ``remember'' their symbols, then write one by one each tape and
move their heads.
\end{sloppypar}
%
%

Next, we show how to encode running times (and space) up to $c\cdot n$
for some integer constant $c > 1$, with only a domain of size $n$
available.  The standard way is to increase the arity of the
relations, e.g.\ with arity $a$ we can represent $n^a$ time steps, but
this is not possible within \FO{3}.  Instead, we partition the
computation into $c$ {\em epochs}, each having exactly $n$ time steps,
and similarly we partition the tapes into $c$ regions, each with $n$
cells.  We denote $T_{0\tau er}(t,p)$, $T_{1\tau er}(t,p)$ the
relations $T_0, T_1$ specialized to tape $\tau$, epoch $e$, and region
$r$, and similarly define $H_{\tau er}$ and $S_{qe}$.  Furthermore, we
modify the sentences that encode the TM transition relation to move
the heads across epochs and regions, using only 3 variables.  The
fourth item is easy: we write a formula stating that initially (at
time 1 of epoch 1), region~1 of (input) tape $\tau_1$ is full of 1's,
and all other regions and tapes are full of 0's.  Moreover,
Appendix~\ref{app:encoding} shows that $\HardSentence$ can be written
in \FO{3}.

%

Finally, $\FOMC(\HardSentence, n)$ is precisely the number of accepting computations of the TM on input $n$, times $n!$, coming from the $n!$ ways of ordering the domain.
\end{proof}

\paragraph*{Proof of Theorem \ref{th:gamma:acyclic}}
We show how to compute $\Pr(Q)$ rather than $\WFOMC(Q, n)$: we have
seen in Sec.~\ref{sec:background} that these two are equivalent.  We
actually prove the theorem for a more general form of query, where
each variable $x_i$ range over a domain of size $n_i$, thus, $Q =
\exists x_1 \in [n_1], \ldots, \exists x_m \in [n_m] \varphi$, where
$\varphi$ is quantifier-free.  The probability of a query under the
standard semantics (when all variables range over the same domain
$[n]$) is obtained by simply setting $n_1=\cdots=n_m=n$.

%
%

To prove the theorem, we use an equivalent definition of
$\gamma$-acyclicity given by Fagin~\cite{DBLP:journals/jacm/Fagin83},
which we give here together with our algorithm for computing $\Pr(Q)$.
The graph is $\gamma$-acyclic if it can be reduced to an empty graph
by applying the following rules, in any order.
  \begin{enumerate}[(a)]
  \item \label{item:a} If a node $x$ is isolated (i.e., it belongs
    to precisely one edge, say $R(x,y,z)$), then delete $x$. In this
    case we replace the relation $R(x,y,z)$ by a new relation
    $R'(y,z)$, where each tuple has probability $1-(1-p)^{n_x}$, where $p$
    is the probability of tuples in $R$.
  \item \label{item:b} If an edge $R(x)$ is a singleton (i.e., if it
    contains exactly one node), then delete that edge (but do not
    delete the node from other edges that might contain it).  Here, we
    condition on the size $k = |R|$.  For each $k$, let $p_k$ be the
    probability of the {\em residual query} obtained by removing
    $R(x)$ and restricting the range of $x$ to $[k]$.  By symmetry,
    this probability depends only on $k = |R|$, and does not depend on
    the choice of the $k$ elements in the domain.  Then $\Pr(Q) =
    \sum_k {n_x \choose k} p_R^k (1-p_R)^{n_x-k} p_k$, where $p_R$
    denotes the probability of a tuple $\Pr(R(i))$, and is the same
    for all constants $i$ (by symmetry).
  \item \label{item:c} If an edge is empty, $R()$, then delete it.
    We multiply the probability of the residual query by
    $p_R$.
  \item \label{item:d} If two edges (say $R(x,y,z)$, $S(x,y,z)$)
    contain precisely the same nodes, then delete one of these edges.
    Here we replace the two atoms by a new atom $R'(x,y,z)$ whose
    probability is $p_R \cdot p_S$.
\item \label{item:e} If two nodes $x, y$ are edge-equivalent, then
  delete one of them from every edge that contains it. (Recall that
  two nodes are edge-equivalent if they are in precisely the same
  edges.)  Here we replace the two variables $x, y$ by a new variable
  $z$, whose range has size $n_z \eqdef n_x \cdot n_y$.
\end{enumerate}

Each operation above is in polynomial time in the size of the binary
representation of the inputs, and there are only polynomially many
operations.  Therefore the entire computation is in polynomial time,
because each intermediate result can be represented using polynomially
many bits.  This follows from the fact that the number of models is
$2^{O(n^a)}$, where $a$ is the maximum arity of any relation in $Q$,
hence the number of models can be represented using $O(n^a) =
n^{O(1)}$ bits.

\begin{example} Consider the following linear chain query:
  \begin{align*}
    Q = & \exists x_0 \exists x_1 \cdots \exists x_m
    R_1(x_0,x_1)\wedge \cdots R_m(x_{m-1},x_m)
  \end{align*}
  where the probabilities of the $m$ relations are $p_1, \ldots, p_m$.
  Denote $P_{n_0, \ldots, n_m}$ the probability of $Q$ when the
  domains of $x_0, x_1, \ldots, x_m$ are sets of sizes $n_0, n_1,
  \ldots, n_m$ (thus, initially $n_0 = n_1 = \cdots = n_m = n$).  Then
  the variable $x_m$ is isolated (item \ref{item:a}), hence we can
  eliminate it and update the probability of $R_m$ to
  $1-(1-p_m)^{n_m}$.  Now $R_m$ is a singleton relation, hence we can
  remove it (item \ref{item:b}), and restrict the domain of $x_{m-1}$
  to have size $k_{m-1}$, for $k_{m-1} = 1,n_{m-1}$.  Therefore:
  \begin{align*}
    P_{n_0, \ldots, n_{m-1}, n_m} =   \sum_{k_{m-1}=1,n_{m-1}} & P_{n_0,  \ldots, n_{m-2},k_{m-1}} \cdot{n_m \choose k_m} \\
   & \cdot [1-(1-p_m)^{k_m}]^{k_{m-1}} \\ & \cdot [(1-p_m)^{k_m}]^{n_{m-1}-k_{m-1}}
  \end{align*}
  Repeating this process we arrive at an expression that is computable
  in polynomial time in $n$ (for a fixed $m$).  Notice that this
  formula does not appear to be computable in polynomial time in both
  $n$ and $m$.  We leave open the combined complexity of acyclic
  queries.
\end{example}
%

\paragraph*{Proof of Theorem \ref{th:c4}}

%
%

First note that, by using resolution, the query implies the following
statement, for every $k\geq 2$:
\begin{align}
  \forall x_1, y_1, \ldots, x_k, y_k (&S(x_1,y_1) \vee \neg S(x_2,y_1) \nonumber \\
\vee &S(x_2,y_2) \vee \neg S(x_2,y_3) \nonumber\\
\vee &\ldots \vee \neg  S(x_1,y_k)) \label{eq:sk}
\end{align}

For any two numbers $n_1, n_2$, denote $Q_{n_1n_2} = \forall x_1 \in
[n_1], \forall x_2 \in [n_1], \forall y_1 \in [n_2], \forall y_2 \in
[n_2],$ $(S(x_1,y_1) \vee \neg S(x_2,y_1) \vee S(x_2,y_2) \vee \neg
S(x_1,y_2))$, in other words we restrict the range of the variables to
some domains $[n_1], [n_2]$.  These domains are not required to be
disjoint, instead we use the standard assumption $n_1 \leq n_2$
implies $[n_1] \subseteq [n_2]$.  When $n_1=n_2=n$ then $Q_{n_1n_2}$
is equivalent to $Q_{S4}$.
%
%
We claim the following.  If $D$ is a model of $Q_{n_1n_2}$, then
either property $P_a$ or $P_b$ holds in~$D$:
\begin{align*}
  P_a \equiv & \exists x \in [n_1], \forall y \in [n_2], S(x,y) \\
  P_b \equiv & \exists y \in [n_2], \forall x \in [n_1], \lnot S(x,y)
\end{align*}
Suppose not. Consider any model of $Q_{n_1n_2}$ that does not satisfy
$P_a, P_b$.
Pick any element $x_1 \in [n_1]$. As $P_a$ does not hold, $\exists y_1
\in [n_2] \lnot S(x_1,y_1)$.  As $P_b$ does not hold, $\exists x_2 \in
[n_1], S(x_2, y_1)$.  Continuing, $\exists y_2 \in [n_2]$, $\lnot S(x_2,
y_2)$, $\exists x_3 \in [n_1], S(x_3, y_2)$ and $\exists y_3 \in
[n_2]$, $\lnot S(x_3,y_3)$.
Continuing, we obtain an arbitrarily long sequence of values $x_1,
y_1, x_2, \ldots$ such that: $\lnot S(x_1,y_1)$, $S(x_2,y_1)$, $\lnot
S(x_2,y_2)$, $\dots$, $S(x_{n_1}, y_{n_1-1})$, $\lnot S(x_{n_1},
y_{n_1})$.  Note that we can never have $x_i=x_j$ or $y_i=y_j$, for
$i\neq j$, because that would violate Eq.(\ref{eq:sk}) for $k=j-i$.
Since the domain is finite, this is a contradiction.

Therefore, either $P_a$ or $P_b$ holds.  Clearly, both statements
cannot hold, as they are exclusive events.  Denote $f$ and
$g$ the weighted model count for $Q_{n_1n_2}$ in these two cases:
\begin{align*}
  f(n_1,n_2) = & \WFOMC(Q_{n_1n_2} \wedge P_a, n, w, \bar w) \\
  g(n_1,n_2) = & \WFOMC(Q_{n_1n_2} \wedge P_b, n, w, \bar w)
\end{align*}
%
%
%
%
Then we have $\WFOMC(Q_{n_1n_2},n, w, \bar w) = f(n_1,n_2) +
g(n_1,n_2)$.  It remains to show how to compute $f, g$.

Consider a model that satisfies $P_a$, hence the set $X =
\setof{x}{\forall y \in [n_2], S(x,y)}$ is non-empty, hence $k=|X|
\geq 1$.  Remove the elements $X$ from the domain $[n_1]$ (and rename
the elements such that $[n_1]-X= [n_1-k]$) and call $D'$ the resulting
substructure.  Then $D'$ still satisfies the query $Q_{(n_1-k),n_2}$,
and, by the removal of all elements $X$, cannot satisfy $P_a$, hence it must satisfy $P_b$.
This justifies the following recurrence, completing the proof of
Theorem \ref{th:c4}:\\[-3ex]
  \begin{align*}
    f(n_1, 0) &= 1 &
    f(n_1, n_2) &= \sum_{k=1}^{n_1} {n_1 \choose k} w^{kn_2}  g(n_1-k,n_2) \\
    g(0, n_2) &= 1 &
    g(n_1, n_2) &= \sum_{\ell=1}^{n_2} {n_2 \choose \ell} {\bar w}^{n_1\ell}  f(n_1,n_2-\ell)\\[-4ex]
  \end{align*}


\section{Combined Complexity}

\label{sec:combined-complexity}

In the combined complexity we consider a fixed vocabulary $\sigma =
(R_1, \ldots, R_m)$, and assume that both $\Phi$ and $n$ are given as
part of the input. As before, $n$ is given in unary (tally) notation.
We consider both the FOMC problem, ``{\em compute
  $\FOMC(\Phi, n)$}'', and the associated decision problem ``{\em is
  $n \in \texttt{Spec}(\Phi)$?}''.  Our upper bound for FOMC also
holds for WFOMC.  Recall that the spectrum $\texttt{Spec}(\Phi)$ of a
formula $\Phi$ is the set of numbers $n$ for which $\Phi$ has a model
over a domain of size~$n$.

\begin{sloppypar}
Vardi~\cite{DBLP:conf/stoc/Vardi82} proved that the model checking
problem, ``{\em given $\Phi$ and a structure $D$, is $\Phi$ true in
  $D$?}'' is PSPACE-complete.  This implies that the above decision
problem is also in PSPACE: to check $n \in \texttt{Spec}(\Phi)$
enumerate over all structures $D$ of a domain of size $n$, and check
if $\Phi$ is true in $D$.  By the same argument, FOMC is also in
PSPACE.  We prove:
\end{sloppypar}

%
%

\begin{theorem} \label{th:fo2} 
\begin{sloppypar}
  (1) For every $k\geq 2$, the combined
  complexity for FOMC for \FO{k} is \SHARP-complete.
(2) The combined complexity for the decision problem $n \in
\texttt{Spec}(\Phi)$ is NP-complete for \FO{2}, and is PSPACE-complete
for FO.
\end{sloppypar}
\end{theorem}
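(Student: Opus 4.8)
The plan is to dispatch the two membership (upper-bound) claims, which are routine, and then concentrate on the three hardness claims, of which the \SHARP-hardness of FOMC is by far the most demanding. For the upper bound in part~(1) I would build a nondeterministic polynomial-time machine that, on input $(\Phi,1^n)$, guesses an arbitrary structure over the domain $[n]$ --- this is just a nondeterministic guess of the polynomially many ($\sum_i n^{\mathrm{arity}(R_i)}$, since $\sigma$ is fixed) ground atoms, one bit each --- and then deterministically checks $D\models\Phi$ using the polynomial-time combined-complexity evaluation algorithm for the bounded-variable fragment \FO{k} (Vardi). The guessed branches are in bijection with structures, and we accept exactly on the models, so the number of accepting branches is $\FOMC(\Phi,n)$, placing it in \SHARP; the same machine, run with a weighted sum or with interpolation over the rational weights (as in the negative-weights discussion of \autoref{sec:background}), yields the matching $\mathrm{FP}^{\SHARP}$ upper bound for $\WFOMC$. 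Since every \FO{2} sentence is also an \FO{k} sentence, it then suffices to prove \SHARP-hardness already for \FO{2}.

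For the hardness in part~(1) the plan is to reduce from a \SHARP-complete Boolean counting problem; I would take \#2SAT (counting satisfying assignments of a 2-CNF $\psi$), which is \SHARP-complete and whose clauses each mention only two variables, matching the two logical variables of \FO{2}. Fix the domain size $n$ to be the number $N$ of variables of $\psi$, use a unary relation $P$ for the assignment ($P(x)$ means ``variable $x$ is true''), four binary relations $E_{++},E_{+-},E_{-+},E_{--}$ recording, for each sign pattern, the pairs of variables co-occurring in a clause, and an order relation used only to break the symmetry of the domain. I would then let $\Phi$ be the conjunction of (i) axioms forcing the order to be a fixed linear order (so the $n!$ orderings of $[n]$ each supply one ``naming'' of the variables), (ii) the clause constraint $\forall x\,\forall y\,\bigwedge_{s_1 s_2}\bigl(E_{s_1 s_2}(x,y)\to(\mathrm{lit}_{s_1}(x)\vee\mathrm{lit}_{s_2}(y))\bigr)$ with $\mathrm{lit}_+(x)\equiv P(x)$ and $\mathrm{lit}_-(x)\equiv\neg P(x)$, and (iii) a large conjunction hard-coding the $E_{s_1 s_2}$ as exactly the incidence relations of $\psi$ under the naming given by the order. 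Under such a $\Phi$, for each of the $n!$ linear orders the relations $E_{s_1 s_2}$ are uniquely determined while $P$ ranges freely over all $2^n$ subsets, and $\Phi$ holds precisely when the induced assignment satisfies $\psi$; hence $\FOMC(\Phi,n)=n!\cdot\#\psi$, so $\#\psi=\FOMC(\Phi,n)/n!$ recovers the count.

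The step I expect to be the genuine obstacle is (iii) together with the order axioms: with a fixed vocabulary and only two logical variables one cannot refer to ``the $i$-th element'' by a bounded formula, nor even axiomatize transitivity directly (transitivity is inherently three-variable). Making the naming work is where the real ingenuity lies --- I would introduce an addressing gadget (a forced successor relation together with auxiliary relations encoding positions, in the spirit of the time-stamp encoding used in the proof of \autoref{lemma:2}) so that each variable of $\psi$ is pinned down by a bounded formula, and I reduce from \#2SAT precisely so that every constraint involves only two named elements. For $k\ge 3$ the same template is easier, since a genuine linear order is available; the $k=2$ case is the delicate one, and I expect to spend most of the effort there.

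For part~(2) the upper bounds are direct: a model of size $n$ (given in unary) has polynomial size, so for \FO{2} one guesses such a structure and verifies $D\models\Phi$ in polynomial time, placing $n\in\texttt{Spec}(\Phi)$ in NP, while for full FO one enumerates the size-$n$ structures one at a time and checks each with Vardi's PSPACE model-checking algorithm, reusing space, giving PSPACE. For NP-hardness of the \FO{2} decision problem I would reuse the encoding above but only assert the \emph{existence} of a satisfying $P$ (reducing from 2-CNF satisfiability, or from graph $3$-colouring), so that $\Phi$ has a model of size $n$ iff $\psi$ is satisfiable. For PSPACE-hardness of the FO decision problem I would reduce from TQBF: on $Q_1 x_1\cdots Q_N x_N\,\phi$ take a two-element domain whose minimum and maximum (pinned by a forced order) name the truth values $0$ and $1$, and let $\Phi$ assert that the domain is linearly ordered of size two and that $Q_1 x_1\cdots Q_N x_N\,\hat\phi$ holds, where $\hat\phi$ replaces the literal $x_i$ by ``$x_i=\max$'' and $\neg x_i$ by ``$x_i=\min$''; then $2\in\texttt{Spec}(\Phi)$ iff the QBF is true. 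Here the unbounded number of logical variables of full FO is exactly what lets the quantifier alternation of the QBF be simulated, and the naming problem evaporates because a two-element ordered domain has nameable elements.
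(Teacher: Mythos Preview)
Your upper bounds are sound; for \SHARP-membership, guessing a structure and invoking the polynomial-time combined complexity of \FO{k} model checking is a clean alternative to the paper's route, which instead applies Scott's reduction to flatten the nesting (so that quantifier depth is at most~$k$) and then bounds the lineage by $O(n^k s)$. Your PSPACE-hardness reduction over a two-element domain is also correct, and in fact simpler than the paper's, which extends its \FO{2} gadget by making $S$ ternary so that the QBF quantifiers $\forall X_i,\exists X_i$ become first-order quantifiers over the extra argument.

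The real gap is the \FO{2} hardness. You correctly diagnose the obstacle---no transitivity axiom, no direct naming of positions---but you do not resolve it, and pointing to ``the time-stamp encoding of \autoref{lemma:2}'' aims the wrong way: that encoding is \FO{3} precisely because it axiomatizes a linear order via transitivity. The idea you are missing is to force a rigid \emph{path} using two reused variables. Over a domain of size $n{+}1$ the paper asserts that there is an $A$--$R$--$\cdots$--$R$--$B$ walk on exactly $n$ nodes, and that there is no such walk on $m$ nodes for any $m\in[2n]\setminus\{n\}$; each of these is a single \FO{2} sentence by variable reuse, and together (with a unique $C$-node kept off the $R$-edges) they pin $R$ down to the successor relation of a Hamiltonian path $c_1,\dots,c_n$. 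Then, rather than hard-coding a clause relation~$E$, the paper replaces each Boolean variable $X_i$ by the \emph{closed} sentence $\gamma_i \equiv \exists x\,\exists z\,(S(z,x)\wedge\alpha_i(x))$, where $\alpha_i(x)$ says ``$x$ is $i$ steps along the path from~$A$''; the Boolean formula $F$ becomes the propositional combination $F(\gamma_1,\dots,\gamma_n)$, so one never has to name two positions simultaneously and the reduction works from arbitrary \#SAT, not only \#2SAT. This yields $\FOMC(\varphi_F,n{+}1)=(n{+}1)!\cdot\#F$. (A related slip: your proposed NP-hardness source, $2$-CNF satisfiability, is in~P; once the path gadget is in hand, the same construction already reduces SAT to the question $n{+}1\in\texttt{Spec}(\varphi_F)$.)
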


\begin{sloppypar}
The \SHARP-membership in (1) also holds for the WFOMC problem.  Recall
that the vocabulary $\sigma$ is fixed.  If $\sigma$ were allowed to be
part of the input, then the lower bound in (1) follows immediately
from the \SHARP-hardness result for \#SAT, because any Boolean formula
is trivially encoded as an \FO{0} formula, by introducing a new,
zero-ary relational symbol for every Boolean variable.
\end{sloppypar}


\paragraph*{Proof of Theorem \ref{th:fo2}}

We start by proving item (1) of Theorem \ref{th:fo2}.
%
To prove membership in \SHARP, it suffices to show that the lineage of
a sentence $\varphi$ of size $s$ over a domain of size $n$ is
polynomial in $s$ and $n$, then use the fact that WMC for Boolean
functions is in \SHARP.  However, \FO{k} formulas have, in general,
exponentially large lineage, e.g. the formula checking for the
existence of a path of length $n$, $\exists x \exists y(R(x,y) \wedge
\exists x(R(y,x) \wedge \exists y(R(y,x) \wedge \ldots)))$, over a
domain of size $n$ has lineage of size $\Omega(n^n)$.  Instead, we
first transform the formula by removing all nested variables.  For
that, we apply Scott's reduction, which we give below, following the
presentation by Gr{\"a}del, Kolaitis, and
Vardi~\cite[Prop.3.1]{DBLP:journals/bsl/GradelKV97}; while Scott's
reduction was described for \FO{2}, it carries over unchanged to
\FO{k}.  More precisely, the reduction converts a sentence $\varphi$
of size $s$ into a new sentence $\varphi^*$ over an extended
vocabulary, satisfying the following properties:

\begin{enumerate}
\item \label{item:scott:1} \begin{sloppypar}
  The finite models of $\varphi$ and
  $\varphi^*$ are in one-to-one correspondence, and the corresponding
  models have the same weight.
  \end{sloppypar}
\item \label{item:scott:2} $\varphi^*$ has size $O(s)$.
\item \label{item:scott:3} $\varphi^*$ is a conjunction of sentences
  in prenex normal form, i.e. $Q_1 x_1 Q_2 x_2 \cdots Q_k x_k \psi$
  where each $Q_i$ is either $\forall$ or $\exists$, and $\psi$ is
  quantifier-free.
\end{enumerate}

The new formula has a lineage of size $O(n^k s)$, because its
quantifier depth is bounded by $k= O(1)$, which implies WFOMC is in
\SHARP.  It remains to describe Scott's reduction, which we review
here briefly, for completeness.  Introduce a new relational symbol
$S_\psi$ for every subformula $\psi$ of $\varphi$, where the arity of
$S_\psi$ equals the number of free variables in $\psi$, and define the
sentence $\theta_\psi \equiv \forall x_1 \cdots \forall
x_\ell(S_\psi(x_1, \ldots, x_\ell) \Leftrightarrow \theta_\psi')$,
where $\theta'_\psi$ depends on the structure of $\psi$ as follows: if
$\psi$ is an atomic formula, then $\theta'_\psi = \psi$, if
$\psi=\psi_1 \wedge \psi_2$ then $\theta'_\psi = S_{\psi_1} \wedge
S_{\psi_2}$, if $\psi = \neg \psi_1$ then $\theta'_\psi = \neg
S_{\psi_1}$ and if $\psi = \forall x \psi_1$ then $\theta'_\psi =
\forall x S_{\psi_1}$.  The new formula $\varphi^*$ is defined as
$S_\varphi \wedge \bigwedge_\psi \theta_\psi$.  By setting $w(S_\psi)
= \bar w(S_\psi) = 1$ for all new symbols, we ensure that the models
of $\varphi$ and $\varphi^*$ are not just in one-to-one
correspondence; they have the same weights.


Next we prove \SHARP-hardness for \FO{2} (this implies hardness for
\FO{k} for every $k \geq 2$).  We use reduction from \#SAT: given a
Boolean formula $F$ over $n$ variables $X_1, \ldots, X_n$, compute
$\#F$.  This problem is
\SHARP-hard~\cite{DBLP:journals/tcs/Valiant79}.


\begin{sloppypar}
  Define the vocabulary $\sigma$ consisting of 3 unary symbols $A, B,
  C$, and 2 binary symbols $R, S$.  Given a Boolean formula $F$, we
  construct an \FO{2} sentence $\varphi_F$ such that, over a domain of
  size $n+1$, the number of models of $\varphi_F$ is $\FOMC(\varphi_F,
  n+1) = (n+1)!  \cdot \#F$.
\begin{figure}[t]
  \centering
  \includegraphics{./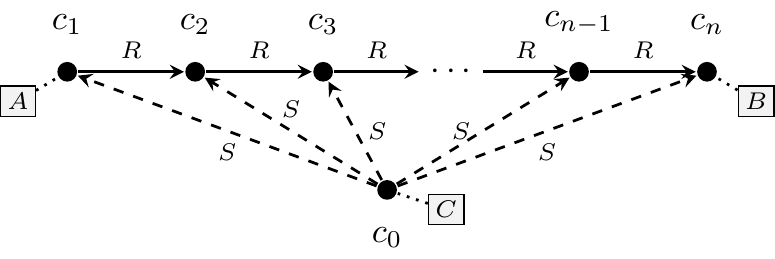}
  \caption{All models of $\varphi_F$ represent graphs of the depicted
  form. There is a linear graph of $R$-edges between the distinguished
  nodes $A$ and $B$. The $S$-edges go from a distinguished node $C$ to
  all others. They are optional, and are in one-to-one correspondence
  with the variables $X_i$ in $F$.
  }
  \label{fig:graph}
\end{figure} 
The sentence
$\varphi_F$ enforces a particular graph structure, as illustrated in Figure~\ref{fig:graph}, by asserting the following:
\end{sloppypar}

\begin{table*}
  \centering
  \begin{tabular}{|l|l|} \hline
    Untyped triangles & $\exists x, y, z(R(x,y), R(y,z), R(z,x))$ \\ \hline
    Typed triangles (3-cycle) & $\exists x, y, z(R(x,y), S(y,z), T(z,x))$ \\ \hline
    $k$-cycle, for $k \geq 3$ & $\exists x_1, \ldots, x_k (R_1(x_1,x_2), R_2(x_2,x_3), \ldots,  R_k(x_k,x_1)$ \\\hline
    Transitivity  & $\forall x,y,z (E(x,y) \land E(y,z) \Rightarrow E(x,z))$ \\ \hline
    Homophily  & $\forall x,y,z (R(x,y) \land S(x,z) \Rightarrow R(z,y))$ \\ \hline
    Extension Axiom (Simplified) & 
    $\!\begin{aligned}
      & \forall x_1, x_2, x_3 (x_1\neq x_2 \wedge x_1 \neq x_3 \wedge x_2 \neq x_3 \,\,\Rightarrow\,\, \exists y E(x_1,y) \wedge E(x_2,y) \wedge E(x_3,y))
    \end{aligned}$ \\ \hline
\end{tabular}
\caption{A list of open problems: for each formula it is conjectured
  that FOMC is hard.
}
  \label{tab:open}
\end{table*}

\begin{itemize}
\item There exists three unique, distinct elements $x, y, z$ such that
  $A(x), B(y), C(z)$ are true:
  \newline $\exists x A(x) \wedge \forall x, \forall y (A(x) \wedge
  A(y)) \Rightarrow x=y$, and similarly for $B$ and $C$;
  \newline
  $\neg \exists x (A(x) \wedge B(x))$ and similarly for $A,C$, and
  $B,C$.
\item There exist $n$ elements $x_1, \ldots, x_n$ such that the
  following holds:
  \newline $A(x_1), R(x_1,x_2),R(x_2,x_3), \ldots,
  R(x_{n-1},x_n),B(x_n)$.
  \newline This is expressible in \FO{2}, by reusing variables.
\item For every number $m \in [2n]-\set{n}$, it is not the case that
  there exists $m$ elements $x_1, \ldots, x_m$ such that:
  \newline $A(x_1), R(x_1,x_2),R(x_2,x_3), \ldots, R(x_{m-1},x_m),B(x_m)$.
\item For all $x,y$, if $R(x,y)$ then neither $C(x)$ nor $C(y)$.
\item For all $x,y$, if $S(x,y)$ then $C(x)$.
\item Finally, $\varphi_F$ contains a statement obtained from $F$ by
  replacing each Boolean variable $X_i$ by the sentence $\gamma_i
  \eqdef \exists x, \exists z (S(z,x) \wedge \alpha_i(x))$, where
  $\alpha_i(x)$ is the following formula with free variable $x$: there
  exists a path $A(x_1), R(x_1,x_2),\ldots, R(x_{i-1},x)$ (if $i=1$
  then $\alpha_1(x) \equiv A(x)$).
\end{itemize}

The reader may check that, for any database instance $D$ over a domain
of size $n+1$ that satisfies $\varphi_F$ there exists a unique
permutation $c_0, c_1, \ldots, c_n$ over its domain such that the
relations $A, B, C$ and $R$ contain precisely the following tuples:
$C(c_0), A(c_1), B(c_n)$, $R(c_1,c_2), \ldots,$ $R(c_{n-1},c_n)$.
Indeed, if it contained some $R(c_i,c_j)$ with $j\neq i+1$, then $c_1,
\ldots, c_i, c_j, c_{j+1}, \ldots, c_n$ forms a path from $A$ to $B$
of some length $m \leq 2n$ and $m\neq n$, which contradicts the
sentence $\varphi_F$; notice also that $\alpha_i(x)$ is true iff
$x=c_i$.  Therefore the only relation that is left unspecified in $D$
is $S$, which may contain an arbitrary number of tuples of the form
$S(c_0,c_i)$.  These tuples are in one-to-one correspondence with the
Boolean variables $X_i$, proving our claim.

\begin{sloppypar}
Now we prove item (2) of Theorem \ref{th:fo2}.  The claim for \FO{2}
follows immediately from the proof above.  It remains to prove that
the combined complexity for FO is PSPACE, for which we use a reduction
from the Quantified Boolean Formula (QBF) problem, which is known to
be PSPACE complete.  A {\em Quantified Boolean Formula} is a formula
of the form $Q_1 X_1 Q_2 X_2 \ldots Q_n X_n F$ where each $Q_i$ is a
quantifier $\forall$ or $\exists$, and $F$ is a Boolean formula over
the variables $X_1, \ldots, X_n$.  We make the following change to the
construction above.  Recall that a Boolean variable $X_i$ in $F$ was
represented by $S(c_0,c_i)$.  Now we extend $S$ to a ternary relation
$S(x,y,u)$, restrict $u$ to two constants (we choose $c_1$ and $c_n$
arbitrarily) and represent $X_i$ by $S(c_0,c_i,c_1)$ and $\neg X_i$ by
$S(c_0,c_i,c_n)$.  Then, we replace the quantifiers $\forall X_i$ or
$\exists X_i$ with $\forall u$ or $\exists u$.  More precisely, the
new formula $\varphi_F$ contains the following statements:
\end{sloppypar}

\begin{itemize}
\item \begin{sloppypar}
  If $S(x,y,u)$ is true, then $u$ is either the distinguished $A$
  or the distinguished $B$ element: $\forall x, y, u (S(x,y,u)
  \Rightarrow A(u) \vee B(u))$.
  \end{sloppypar}
\item \begin{sloppypar}
  If $u,v$ are the distinguished $A$ and $B$ elements, then
  $S(x,y,u)$ is the negation of $S(x,y,v)$:
  $\forall u,v,x,y (A(u) \wedge B(v) \Rightarrow (S(x,y,u) \texttt{ xor
} S(x,y,v)))$.
  \end{sloppypar}
\end{itemize}

\begin{sloppypar}
Finally, we rewrite a QBF $\forall X_i(\ldots)$ into $\forall u (A(u)
\vee B(u) \Rightarrow \ldots)$ and a QBF $\exists X_i (\ldots)$ into
$\exists u ((A(u) \vee B(u)) \wedge \ldots)$.  We omit the
straightforward details.
\end{sloppypar}

\section{Conclusions}
\label{sec:conclusions}

In this paper we discuss the symmetric Weighted FO Model Counting
Problem.  Our motivation comes from probabilistic inference in Markov
Logic Networks, with applications to modern, large knowledge bases,
but the problem is also of independent theoretical interest. We
studied both the data complexity, and the combined complexity.  For
the data complexity we established for the first time the existence of
an FO sentence for which the Symmetric Model Counting problem is
\SHARPONE-hard, and also the existence of a Conjunctive Query for
which the Symmetric Weighted Model Counting problem is \SHARPONE-hard.
We also showed that for all $\gamma$-acyclic conjunctive queries WFOMC
can be computed in polynomial time.  For the combined complexity, we
proved a tight bound of \SHARP-completeness for \FO{2}.  We also
discussed the associate decisions problem.

We end this paper with a list of open problems, listed in Table
\ref{tab:open}: for each query in the table, the complexity of the
FOMC or the WFOMC problem is open.


\paragraph*{Acknowledgments} We thank Ronald Fagin, Phokion Kolaitis and Lidia Tendera for
discussions on topics related to this paper.  This work was partially
supported by NSF IIS-1115188, IIS-0911036, CCF-1217099, and the Research Foundation-Flanders (FWO-Vlaanderen).


%
%
%

\bibliographystyle{plain}
\bibliography{bib}  

\clearpage


\appendix

\section{The Three Lemmas}

\subsection{Proof of Lemma~\ref{lemma:exists}: Removing Exists} Following the
proof of~\cite{DBLP:conf/kr/BroeckMD14}, we show how to eliminate
existential quantifiers from a $\WFOMC$ problem (a form of
Skolemization).  Assume that $\Phi$ is in prenex normal form: $\Phi =
Q_1 x_1 Q_2 x_2 \ldots Q_k x_k \Psi$, where each $Q_i$ is either
$\forall$ or $\exists$, and $\Psi$ is quantifier-free.  Let $i$ be the
first position of an $\exists$, and denote $\varphi(\mathbf{x},x_i) =
Q_{i+1} x_{i+1} \ldots Q_k x_k \Psi$; note that $\varphi$ is a formula
with free variables $\mathbf{x} = (x_1, \ldots, x_{i-1})$.  We have:
\begin{align*}
  \Phi = & \forall \mathbf{x} \exists x_i \varphi(\mathbf{x},x_i)
\end{align*}
Let $A$ be a fresh relational symbol of arity $i$.  The new formula
$\Phi'$ is:
\begin{align}
  \Phi' = &  \forall \mathbf{x} ((\exists x_i \varphi(\mathbf{x},x_i)) \Rightarrow A(\mathbf{x}))\label{eq:elim:exists}
\end{align}

Let $w', \bar w'$ denote the weights of $\Phi$'s vocabulary extended
with $w(A)=1$ and $\bar w(A)=-1$.  We claim that $\WFOMC(\Phi, n, w, \bar
w) = \WFOMC(\Phi', n, w', \bar w')$.  Consider a possible world $D
\subseteq \Tup(n)$ that satisfies $\Phi'$.  Call $D$ ``good'' if it
also satisfies $\Phi$.  In a good world $D$, for any constants
$\mathbf{x} = \mathbf{a}$, the sentence $\exists x_i
\varphi(\mathbf{a}, x_i)$ is true, hence $A(\mathbf{a})$ is also true
(because $\Phi'$ is true), and therefore the weight of $D$ is the same
as the weight of $D-\set{A}$, the world obtained from $D$ by removing
all tuples referring to the relational symbol $A$: $W(D, w', \bar w')
= W(D-\set{A}, w, \bar w)$.  Thus, the sum of the weights of the good
worlds (see Eq.(\ref{eq:def:wmc})) is precisely $\WFOMC(\Phi, n, w, \bar
w)$.  We prove that the sum of the weights of the bad worlds is zero.
Let $D$ be a bad world: it satisfies $\Phi'$ but not $\Phi$.  Thus,
there exists some constants $\mathbf{a}$ s.t. the sentence $\exists
x_i \varphi(\mathbf{a}, x_i)$ is false; choose $\mathbf{a}$ to be the
first such constants, in some lexicographic order.  Let $D'$ be the
world obtained from $D$ by flipping the status of $A(\mathbf{a})$:
thus $D, D'$ are identical, but one sets $A(\mathbf{a})$ to true and
the other to false.  Both satisfy $\Phi'$, and $W(D, w', \bar w') = -
W(D', w', \bar w')$, therefore they cancel out in the sum of
Eq.(\ref{eq:def:wmc}).  This proves that $\WFOMC(\Phi, n, w, \bar w) =
\WFOMC(\Phi', n, w', \bar w')$.

We note that $\Phi'$ can be written equivalently as:
\begin{align*}
\Phi' = &   \forall \mathbf{x} \forall x_i (\neg \varphi(\mathbf{x},x_i) \vee A(\mathbf{x}))
\end{align*}
In other words, we have replaced the first existential quantifier in
$\Phi$ by a universal quantifier (and may have increased the number of
$\exists$ on positions $i+1, i+2, \ldots$).  Lemma \ref{lemma:exists}
follows by applying this procedure inductively.

\subsection{Proof of Lemma~\ref{lemma:neg}: Removing Negation} Let $\neg
\psi(\mathbf{x})$ be a negated subformula of $\Phi$, with $k$ free
variables $\mathbf{x}$.  Let $A$, $B$ be two new relational symbols of
arity $k$.  Let $\Phi_p$ denote the sentence obtained from $\Phi$ by
replacing the subformula $\neg \psi(\mathbf{x})$ with $A(\mathbf{x})$.
Denote:
\begin{align}
  \Delta = & \forall \mathbf{x} [(\psi(\mathbf{x}) \vee A(\mathbf{x}))  \wedge (A(\mathbf{x}) \vee B(\mathbf{x})) \wedge (\psi(\mathbf{x})  \vee B(\mathbf{x}))] \label{eq:delta}
\end{align}
Extend the weight functions $w, \bar w$ to $w', \bar w'$ by setting
$w(A) = \bar w(A) = w(B) = 1$, $\bar w(B) = -1$.  Define $\Phi' =
\Phi_p \wedge \Delta$.  We claim that $\WFOMC(\Phi, n, w, \bar w) =
\WFOMC(\Phi', n, w', \bar w')$.  To prove this, consider a world $D
\subseteq \Tup(n)$ over the vocabulary of $\Phi'$, and assume that $D$
satisfies $\Phi'$.  Call $D$ ``good'', if the statement $\forall
\mathbf{x} (\psi(\mathbf{x}) \text{ xor } A(\mathbf{x}))$ holds.  It
is easy to see that in any good world, $\forall \mathbf{x}
B(\mathbf{x})$ holds too, hence the good world has the same weight as
the world obtained by stripping it of the additional relations $A,B$,
and, furthermore, their contributions to $\WFOMC(\Phi', n, w', \bar
w')$ is precisely $\WFOMC(\Phi, n, w, \bar w)$.  Consider a bad world:
it satisfies $\Phi'$, but there exists $\mathbf{a}$ such both
$\psi(\mathbf{a})$ and $A(\mathbf{a})$ are true.  In that case
$B(\mathbf{a})$ can be set arbitrarily to true or false and still
satisfy the formula $\Phi'$, hence the contributions of these two
pairing worlds cancel out.  This proves that $\WFOMC(\Phi, n, w, \bar
w) = \WFOMC(\Phi', n, w', \bar w')$.

Lemma \ref{lemma:neg} follows by apply this process repeatedly.

\subsection{Proof of Lemma~\ref{lemma:eq}: Removing Equality} Let $E(x,y)$ be
a new predicate symbol, with weights $w(E)=z$ and $\bar w(E)=1$, where
$z$ is a real value to be determined below.  Define $\Phi_E$ to be
obtained from $\Phi$ by replacing every equality predicate $x=y$ with
$E(x,y)$, and define:
\begin{align*}
  \Phi' = & \Phi_E \wedge \forall x E(x,x)
\end{align*}
Consider the count $f(z) = \WFOMC(\Phi', n, w', \bar
w')$ as a function of $z$, where $w', \bar w'$ extend $w, \bar w$ with
$w(E)=z, \bar w(E)=1$.  This is a polynomial of degree $n^2$ in $z$.
Since $\Phi'$ asserts $\forall x E(x,x)$, all monomials in $f$ have a
degree $\geq n$.  Let $c\cdot z^n$ be the monomial of degree $n$.
Then we claim that its coefficient $c = \WFOMC(\Phi, n, w, \bar w)$.
Indeed, every world $D$ where $E$ has exactly $n$ tuples is a world
where $E$ is interpreted as the equality predicate.  We can compute
$c$ using $n+1$ calls to an oracle for $f(z)$ as follows.  Fix $\delta
> 0$, and denote $\Delta^0 f = f$, $\Delta^{k+1}f(z) = (\Delta^k
f)(z+\delta)-(\Delta^k f)(z)$.  Then $\Delta^n f(0) = c\cdot n! = {n
  \choose 0} f(0) - {n \choose 1} f(\delta) + {n \choose 2} f(2\delta)
- \cdots (-1)^n {n \choose n} f(n\delta)$.


\section{A \#P$_1$-hard Sentence~$\HardSentence$}

\label{app:encoding}

We prove here Lemma~\ref{lemma:2}: shows how to reduce a linear-time,
multi-tape counting TM with a unary input alphabet (such as the
\#P$_1$-complete TM) to the FOMC problem on a first-order
sentence. The sentence that encodes the \#P$_1$-complete TM is
referred to as $\HardSentence$.  This proof is based on the standard
encoding of a deterministic Turing machine into first-order logic, as
used to prove Trakhtenbrot's
theorem~\cite[p.~167]{DBLP:books/sp/Libkin04}.  We extend this
construction in several ways: (1) towards non-deterministic counting
Turing machines, (2) with multiple tapes, (3) with a run time of $c
\cdot n$ for some fixed $c$, instead of $n$, (4) to have $n$ symbols 1
on the input tape, followed by symbols 0, and finally (5) to obtain a
sentence in \FO{3}.

As discussed in Section~\ref{s:encodingddiscussion}, we need to encode run times and space with lengths up to $c \cdot n$, yet we only have a domain size of exactly $n$ available. This is solved by partitioning the run time into $c$ \emph{epochs} of $n$ steps, and the space into $c$ \emph{regions} of $n$ cells. 
Moreover, we assume w.l.o.g.\ that there are two symbols: $\{0,1\}$.

\subsection{Signature}

The signature of $\HardSentence$ consists of the following predicates $P/a$, where $a$ is the arity of $P$:
\begin{itemize}
  \item $<\!/2$, denoting a  \emph{strict linear order} on the domain,
  \item $\Succ/2$, denoting the \emph{successor} relation w.r.t.\ the order on the domain,
  \item $\Min/1$ and $\Max/1$, denoting the \emph{smallest} and \emph{largest} domain element
  \item \emph{state} predicates $S_{qe}/1$, where $S_{qe}(t)$ is true precisely when the machine is in state $q$ at time $t$ in epoch $e$,
  \item \emph{head} predicates $H_{\tau er}/2$, where $H_{\tau er}(t,p)$ is true precisely when at time $t$ in epoch $e$, the head for tape $\tau$ is at position $p$ in region $r$, and
  \item \emph{tape} predicates $T_{s\tau er}/2$, where $T_{s\tau er}(t,p)$ is true precisely when at time $t$ in epoch $e$, tape $\tau$ contains symbol $s \in \{0,1\}$ at position $p$ in region $r$,
  \item \emph{movement} predicates $\GoLeft_{\tau er}/2$ and $\GoRight_{\tau er}/2$, where $\GoLeft_{\tau er}(t,p)$ is true precisely when the head on tape $\tau$ at time $t$ in epoch $e$ is to the left of $p$ in region $r$ (or when $p,r$ is the first cell on its tape and the head is there), and $\GoRight$ is defined similarly, and 
  \item \emph{frame} predicate  $\Unchanged_{\tau er}/2$, where we have that $\Unchanged_{\tau er}(t,p)$ is true precisely when position $p$ in region $r$ of tape $\tau$ did not change going from time $t$ in epoch $e$ to the next time step.
\end{itemize}

\subsection{Sentences}

To encode the Turing machine, we let $\HardSentence$ consist of the following sentences.
\begin{enumerate}
  \item $<$ is an arbitrary strict linear order (total, antisymmetric, irreflexive, and transitive):
  \begin{align*}
    \forall x, \forall y,~& \neg (x=y) \Rightarrow (x < y) \lor (y < x) \\
    \forall x, \forall y,~& \neg (x < y) \lor \neg (y < x)\\
    \forall x, \forall y, \forall z,~& (x < y) \land (y < z) \Rightarrow (x < z)
  \end{align*}  
  \item $\Min$ is the smallest element, and $\Max$ is the largest element:
  \begin{align*}
    \forall x,~ \Min(x) & \Leftrightarrow  \neg \exists y,  (y < x)\\
    \forall x,~ \Max(x) & \Leftrightarrow  \neg \exists y,  (x < y)
  \end{align*}
  \item $\Succ$ is the successor relation:
  \begin{align*}
    \forall x, \forall y,~ \Succ(x,y) \Leftrightarrow (x < y) \land  \neg \exists z, (x < z) \land (z < y )
  \end{align*}
  \item At any time, the machine is in exactly one state:
  \begin{align*}
    \bigwedge_{q,q',e: q \neq q'} \forall t,& ~ \neg S_{qe}(t) \lor \neg S_{q'e}(t) \\
    \bigwedge_{e} \quad\,\, \forall t, & ~ \bigvee_q S_{qe}(t)
  \end{align*}
  \item At any time, the head is in exactly one position per tape:
  \begin{enumerate}
    \item The head is in at least one position:
    \begin{align*}    
      \bigwedge_{\tau,e} \forall t, \exists p, \bigvee_r ~& H_{\tau er}(t,p)
    \end{align*}
    \item The head is in at most one region:
    \begin{align*}    
      \bigwedge_{\tau,e,r} \forall t, \forall p,~& H_{\tau er}(t,p) \Rightarrow \bigwedge_{r': r' \neq r} \forall p', \neg H_{\tau er'}(t,p') 
    \end{align*}
    \item The head is in at most one position per region:
    \begin{align*}    
      &\bigwedge_{\tau,e,r} \forall t, \forall p,~ H_{\tau er}(t,p) \\
      &\quad \Rightarrow \neg \exists p', \neg (p = p') \land H_{\tau er}(t,p') 
    \end{align*}
  \end{enumerate}
  \item At any time, each tape position has exactly one symbol:
  \begin{align*}
    \bigwedge_{\tau,e,r} \forall t, \forall p,~& T_{0\tau er}(t,p) \Leftrightarrow \neg T_{1\tau er}(t,p)
  \end{align*}
  \item In the initial configuration of the TM (first time step), 
  \begin{enumerate}
    \item it is in state $q_1$, and its heads are in the first position:
  \begin{align*}
    \forall x, \Min(x) \Rightarrow S_{q_1e_1}(x) \land \bigwedge_\tau H_{\tau e_1r_1}(x, x) 
  \end{align*}
  \item the first (input) tape $\tau_1$ contains $n$ symbols 1 in the first region, followed by symbol 0 in all other regions (starting with cell $n+1$), and all other tapes $\tau_{i}$ contain symbol 0:
  \begin{align*}
    \forall t, \Min(t) \Rightarrow  &\forall p, T_{1\tau_1e_1r_1}(t,p) \land \bigwedge_{i:i>1} T_{0\tau_1e_1r_{i}}(t,p) \\
    & \land \bigwedge_{i,r: i > 1} T_{0\tau_{i}e_1r}(t,p)
  \end{align*}
  \end{enumerate}
  \item An encoding of the transition relation $\delta$. For example, that state $q_a$ operates on tape $\tau_a$, and that $\delta(q_a,0) = \{(q_b,1,L),(q_c,0,R)\}$ is encoded into the following sentences.
  \begin{enumerate}
    \item What changed when $t$ is before the end of an epoch (i.e., has a successor in the epoch):
    \begin{align*}
      &\bigwedge_{e,r} \forall t, t', \forall p,~ \left[ 
      \begin{array}{r}
        S_{q_ae}(t) \\ \land H_{\tau_aer}(t,p) \\ \land T_{0\tau_aer}(t,p) \\ \land \Succ(t,t')
      \end{array}
      \right] 
      \Rightarrow  \\
      & \quad \left[ 
      \begin{array}{r}
        S_{q_be}(t')  \\ \land \GoLeft_{\tau_a er}(t',p) \\\land T_{1\tau_aer}(t',p)
      \end{array}
      \right] \lor \left[ 
      \begin{array}{r}
        S_{q_ce}(t') \\ \land \GoRight_{\tau_a er}(t',p) \\\land T_{0\tau_aer}(t',p)
      \end{array}
      \right] 
    \end{align*}
    \item What changed when $t$ is at the end of an epoch:
    \begin{align*}
      &\bigwedge_{i,r: 1 \leq i < c} \!\!\!\! \forall t, t', \forall p, \left[ 
      \begin{array}{r}
        S_{q_ae_i}(t) \\ \land H_{\tau_ae_ir}(t,p) \\ \land T_{0\tau_ae_ir}(t,p) \\ \land \Max(t) \\ \land \Min(t')
      \end{array}
      \right] 
      \Rightarrow \\
      &\quad \left[ 
      \begin{array}{r}
        S_{q_be_{i+1}}(t')  \\ \land \GoLeft_{\tau_a e_{i+1}r}(t',p) \\\land T_{1\tau_ae_{i+1}r}(t',p)
      \end{array}
      \right] \lor \left[ 
      \begin{array}{r}
        S_{q_ce_{i+1}}(t') \\ \land \GoRight_{\tau_a e_{i+1}r}(t',p) \\\land T_{0\tau_ae_{i+1}r}(t',p)
      \end{array}
      \right] 
    \end{align*}
    \item What does not change on the tapes: other cells in the region of $\tau_a$ where the head is, regions with no head, and tapes other than~$\tau_a$.
  \begin{align*}
  & \bigwedge_{e,r} \forall t, \forall p, \left[ 
      \begin{array}{r}
        S_{q_ae}(t) \\ \land H_{\tau_aer}(t,p)
      \end{array}
      \right] \\
      & \quad \Rightarrow
      \forall p', (p = p') \lor \Unchanged_{\tau_a er}(t,p') \\
      & \bigwedge_{\tau,e,r} \forall t, \forall p, H_{\tau er}(t,p) \\
      & \quad \Rightarrow
      \bigwedge_{r':r' \neq r} \forall p, \Unchanged_{\tau er'}(t,p)\\
      & \bigwedge_{e} \forall t, S_{q_ae}(t) \\
      & \quad \Rightarrow
      \bigwedge_{\tau,r:\tau \neq \tau_a} \forall p, \Unchanged_{\tau er}(t,p)
  \end{align*}
    \item The positions of the heads on tapes other than $\tau_a$ do not change:
  \begin{align*}
      &\bigwedge_{\tau,e,r: \tau \neq \tau_a} \forall t,t', \forall p, 
      \left[ 
      \begin{array}{r}
        S_{q_ae}(t) \\ \land H_{\tau er}(t,p) \\ \land \Succ(t,t')
      \end{array}
      \right] \\
      & \quad \Rightarrow H_{\tau er}(t',p) \\
      &\bigwedge_{\tau,i,r: \tau \neq \tau_a, 1 \leq i < c} \forall t,t', \forall p, 
      \left[ 
      \begin{array}{r}
        S_{q_ae_i}(t) \\ \land H_{\tau e_i r}(t,p) \\ \land \Max(t) \\ \land \Min(t')
      \end{array}
      \right] \\
      &\quad \Rightarrow H_{\tau e_{i+1}r}(t',p)
  \end{align*}
  \end{enumerate}
  \item The movement predicates are defined as
    \begin{align*}
      \bigwedge_{\tau,e,r} \forall t, \forall p, p', 
      \left[ 
      \begin{array}{r}
        \GoLeft_{\tau er}(t,p) \\ \land \Succ(p',p)
      \end{array}
      \right]  
      &\Leftrightarrow  H_{\tau er}(t,p') \\
      \bigwedge_{\tau,e,i: 1 \leq i < c} \forall t, \forall p, p', 
      \left[ 
      \begin{array}{r}
        \GoLeft_{\tau er_{i+1}}(t,p) \\ \land \Min(p) \\ \land \Max(p')
      \end{array}
      \right]  
      &\Leftrightarrow  H_{\tau er_{i}}(t,p')\\
      \bigwedge_{\tau,e} \forall t, \forall p, 
      \left[ 
      \begin{array}{r}
        \GoLeft_{\tau er_{1}}(t,p) \\ \land \Min(p)
      \end{array}
      \right]  
      &\Leftrightarrow  H_{\tau er_{1}}(t,p) \\      
      \bigwedge_{\tau,e,r} \forall t, \forall p, p', 
      \left[ 
      \begin{array}{r}
        \GoRight_{\tau er}(t,p) \\ \land \Succ(p,p')
      \end{array}
      \right]  
      &\Leftrightarrow  H_{\tau er}(t,p') \\
      \bigwedge_{\tau,e,i: 1 \leq i < c} \forall t, \forall p, p', 
      \left[ 
      \begin{array}{r}
        \GoRight_{\tau er_{i}}(t,p) \\ \land \Max(p) \\ \land \Min(p')
      \end{array}
      \right]  
      &\Leftrightarrow  H_{\tau er_{i+1}}(t,p')\\
      \bigwedge_{\tau,e} \forall t, \forall p, 
      \left[ 
      \begin{array}{r}
        \GoRight_{\tau er_{c}}(t,p) \\ \land \Max(p)
      \end{array}
      \right]  
      &\Leftrightarrow  H_{\tau er_{c}}(t,p)
    \end{align*}
  \item The frame predicates are defined as
  \begin{align*}
      &\bigwedge_{s,\tau,e,r} \forall t,t', \forall p, 
      \left[ 
      \begin{array}{r}
        T_{s\tau er}(t,p) \\ \land \Unchanged_{\tau er}(t,p) \\ \land \Succ(t,t')
      \end{array}
      \right] \\ 
      & \quad \Leftrightarrow  T_{s\tau er}(t',p) \\
      \end{align*}
      \begin{align*}
      &\bigwedge_{s,\tau,i,r: 1 \leq i < c} \forall t,t', \forall p, 
      \left[ 
      \begin{array}{r}
        T_{s\tau e_{i}r}(t,p) \\ \land \Unchanged_{\tau e_{i}r}(t,p) \\ \land \Max(t) \\ \land \Min(t')
      \end{array}
      \right] \\ 
      & \quad \Leftrightarrow  T_{s\tau e_{i+1}r}(t',p)
  \end{align*}
  \item The machine terminates in an accepting state (e.g., $q_1$, $q_5$, $q_{42}$, etc.) :
  \begin{align*}
    \forall t, \Max(t) & \Rightarrow S_{q_1e_c}(t) \lor S_{q_5e_c}(t)\lor S_{q_{42}e_c}(t) \lor \dots 
  \end{align*} 
\end{enumerate}

It is easy to verify that $\HardSentence$ uses at most three logical variables per sentence, and that $\HardSentence$ is therefore in \FO{3}. Note that \FO{3} permits variables to be reused within the same sentence.

For a fixed model of $<\!/2$, that is, a fixed order on the domain, the models of $\HardSentence$ for domain size $n$ correspond one-to-one to the accepting computations of the TM on input~$n$. Since there are exactly $(n!)$ models of $<\!/2$, we can compute the number of accepting computations from the FOMC efficiently.

\section{PTIME Data Complexity for \FO{2}} \label{s:fo2-ptime-proof}

The proof of the fact that the data complexity for \FO{2} is in PTIME is
spread over two references, ~\cite{broeck2011completeness} and
\cite{DBLP:conf/kr/BroeckMD14}. 
\summary{
A brief proof, for
completeness, can be found in the long version of this paper, which is available at \url{http://arxiv.org/abs/1412.1505}.
}{
We include here a brief proof, for
completeness.

Given an \FO{2} formula $\varphi$ of size $s$, we start by applying
the reduction in~\cite{DBLP:journals/bsl/GradelKV97}, which converts
$\varphi$ into a formula $\varphi^*$ with the following properties:

\begin{itemize}
\item Every relational symbol occurring in $\varphi^*$ has arity at
  most 2.
\item Items \ref{item:scott:1} and \ref{item:scott:3} of Scott's
  reduction hold. (Item \ref{item:scott:2} becomes: $\varphi^*$ has
  size $O(s \log s)$.  In our case $\varphi$ is fixed, so it suffices
  to note that the size of $\varphi^*$ is $O(1)$.)
\end{itemize}

The reduction consists of Scott's reduction described above, plus the
following transformation that ensures that all relational symbols have
arity $\leq 2$.  Replace each relational atom of arity $>2$ by a new
unary or binary symbol, for example, replace the atoms
$R(x,y,x)$, $R(y,y,y)$, $R(x,x,y)$ by $R_1(x,y)$, $R_2(x)$, $R_3(x,y)$.  Then
append to $\varphi^*$ conjuncts asserting how the new relational
symbols relate, for example $\forall x (R_1(x,x)\leftrightarrow
R_2(x))$; we refer the reader to~\cite{DBLP:journals/bsl/GradelKV97}
for details.

We perform one more transformation: remove all existential quantifiers
by using Lemma~\ref{lemma:exists}, thus transforming $\varphi^*$ into
a universally quantified sentences:
\begin{align*}
  \varphi^* = \forall x\, \forall y\, \psi(x,y)
\end{align*}
where $\psi(x,y)$ is a quantifier-free formula.

If $\varphi^*$ contains any relational symbol $R$ of arity zero then
we perform a Shannon expansion and compute $P(\varphi^*) =
\Pr(\varphi^*[R=\texttt{false}]) \cdot (1-p(R)) +
\Pr(\varphi^*[R=\texttt{true}]) \cdot p(R)$.  Thus, we can assume
w.l.o.g.\ that all relational symbols in $\varphi^*$ have arity 1 or 2.

Assume first that all relational symbols in $\varphi^*$ have arity 2.
Then we write  its lineage as:
\begin{align*}
  F = \bigwedge_{a,b \in [n]: a < b} \psi(a,b) \wedge \bigwedge_{c \in [n]} \psi(c,c)
\end{align*}
Since all atoms are binary, for any two distinct sets $\set{a,b}\neq
\set{a',b'}$, the formulas $\psi(a,b)$ and $\psi(a',b')$ are
independent probabilistic events, because they depend on disjoint sets
of ground tuples: one depends on tuples of the form $R(a,b)$ or
$R(b,a)$, the other on tuples of the form $R(a',b')$ or $R(b',a')$,
and they are disjoint.  (This would fail if $\psi$ contained a unary
atom, say $U(x)$, because we may have $a=a'$, $b\neq b'$, and in that
case both formulas depend on the tuple $U(a)$.)  Therefore:
\begin{align*}
  \Pr(F) = \prod_{a,b \in [n]: a < b} \Pr(\psi(a,b)) \cdot \prod_{c \in [n]} \Pr(\psi(c,c))
\end{align*}
Because the probabilities are symmetric, the quantity $p_1 =
\Pr(\psi(a,b))$ is independent of $a,b$, while $p_2 = \Pr(\psi(c,c))$ is
independent of $c$, and both can be computed in time $O(1)$.  Therefore,
$\Pr(\varphi) = \Pr(\varphi^*) = p(F) = p_1^{n(n-1)/2} p_2^n$.  For a
simple illustration, consider $\varphi^* = \forall x \forall y (R(x,y)
\vee T(x,y)) \wedge (R(x,y) \vee T(y,x))$.  Then:
\begin{align*}
  F = & \bigwedge_{a,b \in [n]: a<b} (R(a,b) \vee T(a,b))\wedge(R(a,b) \vee T(b,a))\\
      & \qquad \qquad \wedge(R(b,a) \vee T(b,a))\wedge(R(b,a) \vee T(a,b))\\
  & \wedge \bigwedge_{c \in [n]} (R(c,c) \vee T(c,c))
\end{align*}
and the probability is given by $p_1^{n(n-1)/2} p_2^n$ where $p_1 =
\Pr((R(a,b) \vee T(a,b))\wedge(R(a,b) \vee T(b,a))\wedge(R(b,a) \vee
T(b,a))\wedge(R(b,a) \vee T(a,b)))$ and $p_2 = \Pr(R(c,c) \vee T(c,c))$,
both quantities that can be computed using brute force.

Next consider the case when $\varphi^*$ has both unary and binary
relational symbols.  Let $R_1, \ldots, R_m$ be all unary symbols.
Consider the $2^m$ cells defined by conjunctions of these atoms or
their negation, denote them $C_1, \ldots, C_{2^m}$; that is $C_1(x)
\equiv \neg R_1(x) \wedge \cdots \wedge \neg R_m(x)$, $\ldots$,
$C_{2^m}(x) \equiv R_1(x) \wedge \cdots \wedge R_m(x)$.  Let $P$
denote any partition of $[n]$ into $2^m$ disjoint sets, i.e. $P=(S_1,
\ldots, S_{2^m})$ such that $S_1 \cup \cdots \cup S_{2^m} = [n]$.
Denote $(C_1, \ldots, C_{2^m}) = P$ the event that the $2^m$ cells
define precisely the partition $P$.  Then, summing over all partitions
$P$ gives us:
\begin{align}
  \Pr(\varphi^*) = \sum_P \Pr(\varphi^* \wedge (C_1,\cdots, C_{2^m})=P) \label{eq:pcp}
\end{align}
Next, we split $\varphi^*$ into a conjunction of several formulas,
each $x$ ranging over some cell $S_i$ and $y$ over some cell $S_j$:
\begin{align*}
  \varphi^* = & \bigwedge_{i, j \in [2^m]: i<j} \forall x:S_i, \forall y: S_j, (\psi(x,y) \wedge \psi(y,x)) 
       \\
       &\wedge \bigwedge_{\ell \in [2^m]} \forall x:S_\ell, \forall y: S_\ell, \psi(x,y)
\end{align*}
When $x$ ranges over $S_i$, then every unary predicate $R(x)$
containing the variable $x$ is either true or false.  Similarly, when
$y$ ranges over $S_j$, a predicate $R(y)$ is either true or false.
Let $\psi_{ij}(x,y)$ (or $\psi_\ell(x,y)$) denote the formula
$\psi(x,y) \wedge \psi(y,x)$ (or $\psi(x,y)$) where all the unary
predicates have been replaced by true or false, according to the cells
$S_i, S_j$ (or $S_\ell$ respectively).  Therefore:
\begin{align*}
  \varphi^* = & \bigwedge_{i, j \in [2^m]: i<j} \forall x:S_i, \forall y: S_j, \psi_{ij}(x,y)
      \\
      & \wedge  \bigwedge_{\ell \in [2^m]} \forall x:S_\ell, \forall y: S_\ell, \psi_\ell(x,y)
\end{align*}
Notice that $\psi_{ij}$ and $\psi_{\ell}$ have only binary predicates.
All conjuncts in the expression above are independent probabilistic
events: if $\set{i_1,j_1} \neq \set{i_2,j_2}$ then $\forall x:S_{i_1},
\forall y: S_{j_1}, \psi_{i_1j_1}(x,y)$ and $\forall x:S_{i_2}, \forall
y: S_{j_2}, \psi_{i_2j_2}(x,y)$ are independent.  Therefore, denoting
$n_i = |S_i|$ for $i=1,2^m$, we have: $\Pr(\varphi^* \wedge (C_1,\cdots,
C_{2^m})=P)= \prod_{i, j \in [2^m]: i<j} q_{ij} \cdot \prod_{\ell \in
  [2^m} r_\ell$, where:
\begin{align*}
  q_{ij} = & \Pr(\forall x:S_i, \forall y: S_j, \psi_{ij}(x,y)) \\
         = & \prod_{a \in S_i, b \in S_j} \Pr(\psi_{ij}(a,b)) = r_{ij}^{n_in_j}\\
  r_\ell = & \Pr(\forall x:S_\ell, \forall y: S_\ell, \psi_\ell(x,y)) \\
         = & \prod_{a, b\in S_\ell: a<b} \Pr(\psi_\ell(a,b) \wedge \psi_\ell(b,a)) \cdot \prod_{c \in S_\ell} \Pr(\psi_\ell(c,c))\\
        = & s_\ell^{n_\ell*(n_\ell-1)/2} \cdot t_\ell^{n_\ell}
\end{align*}
where $r_{ij} = \Pr(\psi_{ij}(a,b))$, $s_\ell = \Pr(\psi_\ell(a,b) \wedge
\psi_\ell(b,a))$, and $t_\ell = \Pr(\psi_\ell(c,c))$ are independent of
the choices of $a, b, c$ respectively, and can be computed by brute
force in time $O(1)$.  Finally, we use the fact that the probabilities
are symmetric, which implies that the expression in Eq.~(\ref{eq:pcp})
depends only on the cell cardinalities $n_1, \ldots, n_{2^m}$, and not on
the actual cells $S_1, \ldots, S_{2^m}$.  Therefore:
\begin{align*}
  \Pr(\varphi^*) = & \sum_{n_1,\ldots, n_{2^m}: n_1+\cdots+n_{2^m}=1}\frac{n!}{n_1!\cdots n_{2^m}!} \\ & \qquad \cdot r_{ij}^{n_in_j}s_\ell^{n_\ell*(n_\ell-1)/2} \cdot t_\ell^{n_\ell}
\end{align*}
For a simple illustration, consider $\varphi^* = \forall x \forall y
(R(x) \vee U(x,y) \vee T(y))\wedge (\neg R(x) \vee \neg U(x,y) \vee
\neg T(y))$.  Denoting the four cells $\neg R \wedge \neg T$, $\neg R
\wedge T$, $R \wedge \neg T$, $R \wedge T$ by $C_1, \ldots C_4$
respectively, we split $\varphi$ into a conjunct of $6+4$ expressions,
such that in each expression $x$ and $y$ are restricted to the domains
$C_i$ and $C_j$ respectively, for $i\leq j$.  Denoting $n_1, \ldots,
n_4$ the sizes of these cells, we have:
\begin{align*}
  \Pr(\varphi^*) = \sum_{n_1+\cdots +n_4 = n} \frac{n!}{n_1!n_2!n_3!n_4!}r_{ij}^{n_in_j}s_\ell^{n_\ell*(n_\ell-1)/2} \cdot t_\ell^{n_\ell}
\end{align*}
where $r_{12} = \Pr(U(a,b))$ (because $\forall x : S_1, \forall y :
S_2, (R(x) \vee U(x,y) \vee T(y))\wedge (\neg R(x) \vee \neg U(x,y)
\vee \neg T(y)) \equiv \forall x : S_1, \forall y : S_2, T(x,y)$),
and similarly for the others.
}

\end{document}